\def\*#1{\mathbf{#1}}
\def\~#1{\boldsymbol{#1}}
\theoremstyle{plain}
\newtheorem{theorem}{Theorem}[section]
\newtheorem{definition}[theorem]{Definition}
\newtheorem{proposition}[theorem]{Proposition}
\newtheorem{example}[theorem]{Example}
\newtheorem{remark}[theorem]{Remark}
\def\*#1{\mathbf{#1}}
\def\~#1{\boldsymbol{#1}}
\begin{document}

%%
%% The "title" command has an optional parameter,
%% allowing the author to define a "short title" to be used in page headers.
\title{Why Groups Matter: Necessity of Group Structures in Attributions}

%%
%% The "author" command and its associated commands are used to define
%% the authors and their affiliations.
%% Of note is the shared affiliation of the first two authors, and the
%% "authornote" and "authornotemark" commands
%% used to denote shared contribution to the research.
\author{Dangxing Chen, Jingfeng Chen, and Weicheng Ye}
% \authornote{Both authors contributed equally to this research.}
\email{dangxing.chen@dukekunshan.edu.cn}
% \orcid{1234-5678-9012}
% \author{Jingfeng Chen}
% \authornotemark[1]
% \email{webmaster@marysville-ohio.com}
\affiliation{%
  \institution{Duke Kunshan University}
  % \city{Dublin}
  % \state{Ohio}
  \country{China}
}

% \author{Lars Th{\o}rv{\"a}ld}
% \affiliation{%
%   \institution{The Th{\o}rv{\"a}ld Group}
%   \city{Hekla}
%   \country{Iceland}}
% \email{larst@affiliation.org}

% \author{Valerie B\'eranger}
% \affiliation{%
%   \institution{Inria Paris-Rocquencourt}
%   \city{Rocquencourt}
%   \country{France}
% }

% \author{Aparna Patel}
% \affiliation{%
%  \institution{Rajiv Gandhi University}
%  \city{Doimukh}
%  \state{Arunachal Pradesh}
%  \country{India}}

% \author{Huifen Chan}
% \affiliation{%
%   \institution{Tsinghua University}
%   \city{Haidian Qu}
%   \state{Beijing Shi}
%   \country{China}}

% \author{Charles Palmer}
% \affiliation{%
%   \institution{Palmer Research Laboratories}
%   \city{San Antonio}
%   \state{Texas}
%   \country{USA}}
% \email{cpalmer@prl.com}

% \author{John Smith}
% \affiliation{%
%   \institution{The Th{\o}rv{\"a}ld Group}
%   \city{Hekla}
%   \country{Iceland}}
% \email{jsmith@affiliation.org}

% \author{Julius P. Kumquat}
% \affiliation{%
%   \institution{The Kumquat Consortium}
%   \city{New York}
%   \country{USA}}
% \email{jpkumquat@consortium.net}

%%
%% By default, the full list of authors will be used in the page
%% headers. Often, this list is too long, and will overlap
%% other information printed in the page headers. This command allows
%% the author to define a more concise list
%% of authors' names for this purpose.
\renewcommand{\shortauthors}{Trovato et al.}

%%
%% The abstract is a short summary of the work to be presented in the
%% article.
\begin{abstract}
  Explainable machine learning methods have been accompanied by substantial development. Despite their success, the existing approaches focus more on the general framework with no prior domain expertise. High-stakes financial sectors have extensive domain knowledge of the features. Hence, it is expected that explanations of models will be consistent with domain knowledge to ensure conceptual soundness. 
  In this work, we study the group structures of features that are naturally formed in the financial dataset. Our study shows the importance of considering group structures that conform to the regulations. When group structures are present, direct applications of explainable machine learning methods, such as Shapley values and Integrated Gradients, may not provide consistent explanations; alternatively, group versions of the Shapley value can provide consistent explanations. We contain detailed examples to concentrate on the practical perspective of our framework.
\end{abstract}

%%
%% The code below is generated by the tool at http://dl.acm.org/ccs.cfm.
%% Please copy and paste the code instead of the example below.
%%
% \begin{CCSXML}
% <ccs2012>
%  <concept>
%   <concept_id>00000000.0000000.0000000</concept_id>
%   <concept_desc>Do Not Use This Code, Generate the Correct Terms for Your Paper</concept_desc>
%   <concept_significance>500</concept_significance>
%  </concept>
%  <concept>
%   <concept_id>00000000.00000000.00000000</concept_id>
%   <concept_desc>Do Not Use This Code, Generate the Correct Terms for Your Paper</concept_desc>
%   <concept_significance>300</concept_significance>
%  </concept>
%  <concept>
%   <concept_id>00000000.00000000.00000000</concept_id>
%   <concept_desc>Do Not Use This Code, Generate the Correct Terms for Your Paper</concept_desc>
%   <concept_significance>100</concept_significance>
%  </concept>
%  <concept>
%   <concept_id>00000000.00000000.00000000</concept_id>
%   <concept_desc>Do Not Use This Code, Generate the Correct Terms for Your Paper</concept_desc>
%   <concept_significance>100</concept_significance>
%  </concept>
% </ccs2012>
% \end{CCSXML}

% \ccsdesc[500]{Do Not Use This Code~Generate the Correct Terms for Your Paper}
% \ccsdesc[300]{Do Not Use This Code~Generate the Correct Terms for Your Paper}
% \ccsdesc{Do Not Use This Code~Generate the Correct Terms for Your Paper}
% \ccsdesc[100]{Do Not Use This Code~Generate the Correct Terms for Your Paper}

%%
%% Keywords. The author(s) should pick words that accurately describe
%% the work being presented. Separate the keywords with commas.
\keywords{Explainable ML, Attribution Methods, Shapley Value}
%% A "teaser" image appears between the author and affiliation
%% information and the body of the document, and typically spans the
%% page.

% \received{20 February 2007}
% \received[revised]{12 March 2009}
% \received[accepted]{5 June 2009}

%%
%% This command processes the author and affiliation and title
%% information and builds the first part of the formatted document.
\maketitle

\section{Introduction}

Compared to traditional methods, machine learning (ML) models often increase accuracy at the expense of black-box functionality. 
The model explanation is extremely substantial for highly regulated industries such as finance, as stressed in the model risk management handbook by the Office of the Comptroller of the Currency (OCC) \cite{OCC2021model}. Recently, the Consumer Financial Protection Bureau (CFPB) confirmed that anti-discrimination law requires companies to provide detailed explanations when denying a credit application from clients when using ML methods for decision-making \footnote{https://www.consumerfinance.gov/about-us/newsroom/cfpb-acts-to-protect-the-public-from-black-box-credit-models-using-complex-algorithms/}. In response to the growing requirements from the regulators, researchers and especially practitioners are investigating explainable ML methods to provide the required interpretability from the regulation perspective.

Explainable ML methods have been successful in the past. Among these, axiom-based attribution methods, such as Shapley value \cite{sundararajan2020many,lundberg2017unified} and Integrated Gradients (IG) \cite{sundararajan2017axiomatic}, provide both mathematical rigor and practical explanations. An attribution method involves assigning the prediction score of a model based on its base features. An attribution to a base feature can be viewed as the degree to which the feature contributes to the prediction. The allocation of attributes is determined by preserving the desired axioms for fairness \cite{sundararajan2020many}.

Despite the success of attribution methods, the existing analysis has primarily focused on general axioms without prior knowledge, such as completeness, linearity, dummy, and symmetry \cite{sundararajan2020many}. These axioms may be sufficient to provide reasonable explanations in many traditional ML applications, such as computer vision and large language models. However, financial practice applications have benefited from extensive domain knowledge such as monotonicity \cite{chen2023address} and diminishing marginal effects \cite{gupta2020multidimensional}. We must ensure that decision-making models make reasonable predictions grounded in these domains. Accordingly, the OCC's model risk management handbook \cite{OCC2021model} emphasizes conceptual soundness to reflect sound theory and business practice. 
The same requirement can be found in financial models where we should provide a reasonable explanation that conforms to financial theory. Therefore, attribution methods based on financial knowledge have received increasing attention from practitioners and researchers \cite{chen2024asset,shalit2021shapley}. 

This paper focuses on the group structures of features.
% {\color{blue} For example, the locality of the features such as XXX in economics and XXX ratio features in finance. }
% {\color{blue} Audiences have no intuition at this point} 
% \dan{It might be difficult to give a formal definition. I am still thinking.} 
Historically, economic studies have taken group structures into account \cite{owen1977values,calvo2013shapley,kamijo2009two}, such as geographic location and political party affiliation. 
Group structures are also common in finance, for example, several features may be used to describe the same characteristic, such as past-due payments in credit scoring with a variety of durations. Despite this, group structures in finance have been neglected. Recently, there have been calls for group explanations \cite{ijcai2022p778}. 
% {\color{blue} We still need to remoave the redundancy} 

In this paper, we address the following question. \textbf{Is it necessary to take into account group structures when explain ML models?} 
To answer this question, we provide several group-based axioms that describe what we can expect in situations where there is a natural group structure of features based on a specific domain. The analysis of axioms demonstrates that popular attribution methods, such as Shapley value and IG, may not be able to preserve these axioms, thus providing unsatisfactory explanations. Therefore, it is important to exercise caution in ML models when group structure is involved. Alternatively, the group version of Shapley values preserves key axioms while providing consistent explanations.
In summary, our contributions to this work are:
\begin{enumerate}
    \item we propose group axioms for attribution methods with a rigorous theoretical framework.
    \item we provide detailed explanations when group axioms are needed and why they are important. 
\end{enumerate}

\section{Preliminaries}
\label{prerequisites}
For problem setup, assume we have $\mathcal{D} \times \mathcal{Y}$, where $\mathcal{D}$ is the dataset with $n$ samples and $m$ features and $\mathcal{Y}$ is the corresponding numerical values in regression and labels in classification. We denote a class of functions $f: \mathbb{R}^m \rightarrow \mathbb{R}$ by $\mathcal{F}$.  For simplicity, we assume $\*x \in \mathbb{R}^m$ and $f$ is differentiable almost everywhere.

% \subsection{Neural network}

% Fully-connected neural network (FCNN) has been very successful to high-dimensional complex function, due to its universal approximation property. Neural network achieves flexible approximation through multiple layers.  Assume there are $L+1$ layers, we denote $\*a^{(l)}$ as the activation at layer $l$, whereas $\*a^{(0)}$ is the input vector and $a^{(L)}$ is the final output. Information is sent through consecutive layers by updating $\*a^{(l+1)} = g(\*z^{(l)}) = g( \*a^{(l)} \*W^{(l)} + \*b^{(l)})$, whereas $g$ is the activation function and is chosen as the logistic function in this study, $\*W$ and $\*b$ are weights, and $\*z$ is the weighted input. An architecture of single hidden layer neural network with two neurons is plotted in Figure~\ref{fig:ANN}. 

\subsection{Baseline Attribution Methods}
Following \cite{lundstrom2022rigorous}, we denote the point of interest $\overline{\*x}$ to explain as an explicand,  $\*x'$ a baseline, and $\*x$ the general input. The Baseline Attribution Method that interprets features' importance is defined.

\begin{definition}[Baseline Attribution Method (BAM)] \label{def:BAM}
    Given $\overline{\*x}, \*x' \in \mathbb{R}^m$, $f \in \mathcal{F}$, a baseline attribution method is any function of the form $\~{\mathcal{A}}(\overline{\*x},\*x',f): \mathbb{R}^m \times \mathbb{R}^m \times \mathcal{F} \rightarrow \mathbb{R}^m$. We denote $\mathcal{A}_i(\overline{\*x},\*x',f)$ or sometimes simply $\mathcal{A}_i(f)$ or $\mathcal{A}_i$ as the $i$-th attribution of $\~{\mathcal{A}}(\overline{\*x},\*x',f)$. 
\end{definition}

\subsection{Shapley Value and Integrated Gradients}

\subsubsection{(Baseline) Shapley Value}

The Shapley value \cite{lundberg2017unified} takes as input a set function $v:2^M \rightarrow \mathbb{R}$, where $M = \{1, \dots, m\}$. The Shapley value produces attributions $\text{SH}_i$ for each player $i \in M$ by
\begin{align}
    \text{SH}_i = \sum_{S \subseteq M \backslash i} \frac{|S|! (|M|-|S|-1)!}{M!} (v(S \cup i) - v(S)).
\end{align}
Here, we focus on the Baseline Shapley (BShap), which calculates
\begin{align}
    v(\overline{\*x},\*x',f;S) = f(\overline{\*x}_S; \*x'_{M \backslash S}).
\end{align}
That is, baseline values replace the feature's absence. For example, suppose $f(x_1,x_2) = x_1+x_2$, $\overline{\*x} = (\overline{x}_1, \overline{x}_2)$, $\*x' = (0,0)$, and $S=\{1\}$, then we have $v(\overline{\*x},\*x',f;S) = f(\overline{x}_1,0)$. We denote the $i$-th attribution of BShap by $\text{BS}_i(\overline{\*x},\*x',f)$ and $\text{BS}_i(f)$ or $\text{BS}_i$ sometimes. We focus on BShap since it has better theoretical properties by preserving desired axioms, as discussed in \cite{sundararajan2020many}.

\subsubsection{Integrated Gradients}
Given $\overline{\*x}, \*x'$, and $f$,  the $i$-th component of $\overline{\*x}$ of Integrated Gradients (IG) \cite{sundararajan2017axiomatic} is calculated by
    \begin{align}
        \text{IG}_i(\overline{\*x},\*x',f) = (\overline{x}_i-x_i') \int_0^1 \frac{\partial f}{\partial x_i} \left( \*x' + t(\overline{\*x}-\*x') \right) \ dt.
    \end{align}
For simplicity, we often use $\text{IG}_i(f)$ or $\text{IG}_i$ for $\text{IG}_i(\overline{\*x},\*x',f)$.

\subsection{Individual and Pairwise Monotonicity}

Two types of monotonicity that are commonly used in practice are discussed here \cite{chen2023address,gupta2020multidimensional}. Without loss of generality (WLOG), we assume that all monotonic features are monotonically increasing throughout the paper. Individual monotonicity \cite{liu2020certified,runje2023constrained}, as one of the most commonly used domain knowledge, is defined below. 
\begin{definition}[Individual Monotonicity] \label{def:indi_mono}
Suppose we partition the input $\*x$ into $\*x = (x_{i}, \*x_{\neg})$. 
We say $f$ is individually monotonic with respect to $x_{i}$ if $\forall c >0$,
$
 f(x_{i}, \*x_{\neg}) \leq f(x_i+c, \*x_{\neg}).
$
\end{definition}

\begin{example}\label{eg:IM}
    In Credit Scoring, the probability of default should be individually monotonic with respect to the number of past-due payments. 
\end{example}

In practice, certain features are intrinsically more important than others. It is referred to as pairwise monotonicity and has received increasing attention in recent years \cite{chen2023address,chen2022monotonic,gupta2020multidimensional}.

% Analog to \eqref{eq:mono_con1}, we partition $\*x = (x_{\beta},x_{\gamma},\*x_{\neg})$. Without sacrificing generality, we assume that $x_{\beta}$ has greater significance than $x_{\gamma}$. Lastly, we require that all features exhibiting pairwise monotonicity also exhibit individual monotonicity. Pairwise monotonicity can be categorized into two types: strong and weak. As a more general definition, weak pairwise monotonicity is presented below. 

\begin{definition}[Strong Pairwise Monotonicity] \label{def:strong_mono}
Suppose we partition $\*x = (x_{i},x_{j},\*x_{\neg})$. WLOG, we assume that $x_{i}$ has greater significance than $x_{j}$. We say $f$ is strongly monotonic with respect to $x_{i}$ over $x_{j}$ if $\forall \*x$ and $c>0$,
$
     f(x_{i},x_{j}+c,\*x_{\neg}) \leq f(x_{i}+c,x_{j},\*x_{\neg}).
$
\end{definition}

\begin{example}\label{eg:SPM}
 In credit scoring, suppose $f$ calculates the probability of default, $x_{i}$ counts the number of past dues that have been outstanding for more than three months, and $x_{j}$ counts the number of past dues that are outstanding less than three months. As $x_{i}$ is always more important than $x_{j}$, $f$ should be strongly monotonic with respect to $x_{i}$ over $x_{j}$. The pairwise relationship requires that whenever there is an additional past due, a longer past due is always more serious.         
\end{example}

\subsection{Axioms}

Here, we provide important axioms considered for BAMs. 

\begin{enumerate}
    \item Completeness: $\sum_{i=1}^m \mathcal{A}_i(\overline{\*x},\*x',f) = f(\overline{\*x}) - f(\*x')$.
    \item Linearity: $\forall \alpha, \beta \in \mathbb{R}$, $\mathcal{A}_i(\alpha f + \beta g) = \alpha \mathcal{A}_i(f) + \beta \mathcal{A}_i (g)$.
    \item Dummy: If $\partial_i f \equiv 0$, then $\mathcal{A}_i(\overline{\*x},\*x',f) = 0$. 
    \item Symmetry: For a given $i,j$, define $\*x^*$ by swapping the values of $x_i$ and $x_j$, Now suppose that $\forall \*x$, $f(\*x) = f(\*x^*)$. Then if $\overline{x}_i=\overline{x}_j$ and $x_i'=x_j'$, we have $\mathcal{A}_i(\overline{\*x},\*x',f) = \mathcal{A}_j(\overline{\*x},\*x',f)$.
    \item Affine Scale Invariance (ASI): Suppose we partition $\*x = (x_i,\*x_{\neg})$. Define affine transformation as 
    \begin{align*}
        \*h(\*x;i,c,d) = \*h((x_i,\*x_{\neg});i,c,d) = (cx_i+d,\*x_{\neg}).
    \end{align*}
    For simplicity, we denote $\*h(\*x;i,c,d)$ by $\*h_i(\*x)$ for short. For any index $i$, $c \neq 0$ and $d \in \mathbb{R}$, 
    \begin{align*}
        \mathcal{A}_i(\overline{\*x},\*x', f) = \mathcal{A}_i ( \*h_i(\overline{\*x}),\*h_i(\*x'), f \circ \*h_i^{-1}(\*x) ).
    \end{align*}
    % $\mathcal{A}_i((\overline{x}_i,\overline{\*x}_{\neg}), (x_i',\*x_{\neg}'), f) = \mathcal{A}_i \left( \left( \frac{\overline{x}_i-d}{c}, \*x_{\neg} \right),\left(\frac{x_i'-d}{c},\*x'_{\neg} \right), g \right)$.
    \item Demand Individual Monotonicity (DIM): If we partition $\*x = (x_i,\*x_{\neg})$ and $f$ is individually monotonic with respect to $x_i$, then $\mathcal{A}_i((\overline{x}_i+c,\overline{\*x}_{\neg}),\*x',f) \geq \mathcal{A}_i((\overline{x}_i,\overline{\*x}_{\neg}),\*x',f)$, $\forall c > 0$. 
\end{enumerate}

% As discussed in \cite{lundstrom2022rigorous,sundararajan2020many,sundararajan2017axiomatic}, the following result holds. 
\begin{theorem}\cite{lundstrom2022rigorous,sundararajan2020many,sundararajan2017axiomatic}
    BShap preserves all axioms (1)-(6). IG preserves preserves (1)-(5).
\end{theorem}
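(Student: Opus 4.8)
The plan is to prove the two halves separately, since each rests on a different canonical structure. For \text{BShap} the right object is the set function $v(\cdot):=v(\overline{\*x},\*x',f;\cdot)$ of the excerpt, regarded as a cooperative game on $M$, together with the equivalent ``permutation'' form $\text{BS}_i=\frac{1}{m!}\sum_{\pi}\bigl(v(P^\pi_i\cup\{i\})-v(P^\pi_i)\bigr)$, where $P^\pi_i$ is the set of players preceding $i$ in the order $\pi$. For \text{IG} the right object is the segment $\~{\gamma}(t)=\*x'+t(\overline{\*x}-\*x')$ and the identity $\text{IG}_i=(\overline{x}_i-x_i')\int_0^1\partial_i f(\~{\gamma}(t))\,dt$. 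On top of these representations, axioms (1)--(5) are bookkeeping; \textbf{DIM} (6) for \text{BShap} is the one genuinely new verification, and the theorem claims nothing about \textbf{DIM} for \text{IG}.

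For \text{BShap}, axioms (1)--(5): \emph{Completeness} holds because for each fixed $\pi$ the marginal contributions telescope to $v(M)-v(\emptyset)=f(\overline{\*x})-f(\*x')$. \emph{Linearity} follows from $v(\overline{\*x},\*x',\alpha f+\beta g;S)=\alpha v(\overline{\*x},\*x',f;S)+\beta v(\overline{\*x},\*x',g;S)$ and linearity of the Shapley formula in its game argument. For \emph{Dummy}, $\partial_i f\equiv 0$ makes $f$ independent of its $i$-th argument, so $v(S\cup\{i\})=v(S)$ for every $S\not\ni i$ and every summand vanishes. For \emph{Symmetry}, the hypotheses $f(\*x)=f(\*x^*)$, $\overline{x}_i=\overline{x}_j$, $x_i'=x_j'$ say that transposing coordinates $i,j$ fixes both $\overline{\*x}$ and $\*x'$ and commutes with $f$, so $v$ is invariant under the transposition $(i\,j)$; pairing each permutation in the sum with its composite with $(i\,j)$ gives $\text{BS}_i=\text{BS}_j$. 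For \emph{ASI}, the key point is that $v$ itself is invariant: for any $S$, the $i$-th coordinate fed into $f\circ\*h_i^{-1}$ is $\*h_i^{-1}$ applied to $c\overline{x}_i+d$ (if $i\in S$) or to $cx_i'+d$ (if $i\notin S$), which equals $\overline{x}_i$ resp.\ $x_i'$, while all other coordinates are untouched; hence $v(\*h_i(\overline{\*x}),\*h_i(\*x'),f\circ\*h_i^{-1};S)=v(\overline{\*x},\*x',f;S)$ for all $S$, and the attributions coincide.

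For \text{BShap}, \textbf{DIM}: fix $S\not\ni i$ and replace $\overline{x}_i$ by $\overline{x}_i+c$ with $c>0$. Then $v(S)=f(\overline{\*x}_S;\*x'_{M\backslash S})$ is unchanged, since its $i$-th coordinate is the baseline $x_i'$; whereas $v(S\cup\{i\})=f(\overline{\*x}_{S\cup\{i\}};\*x'_{M\backslash(S\cup\{i\})})$ carries $\overline{x}_i$ in slot $i$, so individual monotonicity of $f$ in $x_i$ makes it weakly increase. Every marginal contribution $v(S\cup\{i\})-v(S)$ in the permutation sum is therefore nondecreasing in $c$, hence so is $\text{BS}_i$. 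For \text{IG}: \emph{Completeness} is the fundamental theorem of calculus for $t\mapsto f(\~{\gamma}(t))$, giving $\sum_i\text{IG}_i=\int_0^1\frac{d}{dt}f(\~{\gamma}(t))\,dt=f(\overline{\*x})-f(\*x')$; \emph{Linearity} and \emph{Dummy} are immediate from linearity of $\nabla f$ and of the integral; \emph{Symmetry} uses that under the stated hypotheses $\gamma_i(t)=\gamma_j(t)$ for all $t$, so symmetry of $f$ gives $\partial_i f(\~{\gamma}(t))=\partial_j f(\~{\gamma}(t))$ while $\overline{x}_i-x_i'=\overline{x}_j-x_j'$, making the integrals literally equal; \emph{ASI} is the one-dimensional change of variables in which $\*h_i^{-1}$ scales the $i$-th coordinate by $1/c$, so $\partial_i(f\circ\*h_i^{-1})=\tfrac1c(\partial_i f)\circ\*h_i^{-1}$, the transformed path pushes forward to $\~{\gamma}$, and the prefactor becomes $c(\overline{x}_i-x_i')$; the factors $c$ and $1/c$ cancel and $\text{IG}_i$ is unchanged.

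I expect the only step that is not pure bookkeeping to be \textbf{DIM} for \text{BShap}: the crucial realization is that in the Shapley sum $i$ ranges only over coalitions $S$ not containing it, so $v(S)$ is genuinely insensitive to $\overline{x}_i$ while $v(S\cup\{i\})$ responds monotonically — without that observation one might wrongly worry about cancellation between the two terms. A secondary point deserving a sentence is the ``differentiable almost everywhere'' caveat in the \text{IG} arguments: one should either restrict attention to paths meeting the non-differentiability set in a null set, or invoke absolute continuity of $t\mapsto f(\~{\gamma}(t))$, so that the fundamental theorem of calculus and the \emph{Dummy} step are legitimate.
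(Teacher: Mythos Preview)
Your proof is correct and follows the standard arguments one finds in the cited references. Note, however, that the paper itself does not supply a proof of this theorem at all: it is stated with citations to prior work and no argument appears in the body or the appendix, so there is nothing in the paper to compare your proposal against.
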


\section{Group Structure}

In many applications, features should naturally be classified. Moreover, features can be combined to form groups that share similar characteristics.  Mathematically, we consider a coalition structure over $M = \{1, \dots, m\}$ is a partition of $M$, that is, $B = \{B_1, \dots, B_l\}$ is a group structure if $\cup_{1 \leq i \leq l} B_i = M$ and $B_i \cap B_j = \emptyset$ if $i \neq j$. We also assume $B_i \neq \emptyset$ for all $i$. Denote $\mathcal{B}(M)$ as the set of all coalition structures over $M$. Here are some examples of different practices:
\begin{itemize}
    \item Credit Scoring \cite{chen2023address,chen2022monotonic}: past-due payment information includes the number of past-due payments in different durations. For instance, features that count the number of past dues less than three months and more than three months are naturally in the same group. 
    \item Auto Insurance \footnote{\url{https://www.kaggle.com/datasets/sagnik1511/car-insurance-data/discussion}}: auto insurance depends on the number of past accidents. The accidents could be further divided based on the injury of autos and passengers. Features with different seriousness can be put into the same group. 
    \item Fraud Detection \cite{barata2021active}: to determine if there are any abnormal activities, transaction frequency is an important indicator. It is possible, for instance, to count frequencies daily, such as today, yesterday, etc. This group of features measures the frequency of transactions in an account. 
\end{itemize}

Similar to Definition~\ref{def:BAM}, we define the group attribution method. 

\begin{definition}[Group Attribution Method (GAM)]
    Given $\overline{\*x}, \*x' \in \mathbb{R}^m$, $f \in \mathcal{F}$, and a group structure $B$, a group attribution method is any function of the form $\~{\mathcal{A}}(\overline{\*x},\*x',f): \mathbb{R}^m \times \mathbb{R}^m \times \mathcal{F} \rightarrow \mathbb{R}^l$. We denote $\mathcal{A}_{B_i}(\overline{\*x},\*x',f)$ or sometimes simply $\mathcal{A}_{B_i}(f)$ or $\mathcal{A}_{B_i}$ as the $i$-th group attribution for the group $B_i$ in $\~{\mathcal{A}}(\overline{\*x},\*x',f)$. 
\end{definition}

We are interested in the attribution $\mathcal{A}_{B_i}$ of a group $B_i$. One method is to calculate individual attributions first, and then add them together. Group attributions by Shapley value and IG can be calculated simply by adding the individual attributions together, 
\begin{align*}
    \mathcal{A}_{B_i}(\overline{\*x},\*x',f) = \sum_{j \in B_i} \mathcal{A}_j(\overline{\*x},\*x',f).
\end{align*}
Alternatively, one may decide to determine the group attributions directly, and then study individual attributions, as discussed below.

\subsection{Group Shapley Value}

We review the Shapley value for games with group structures \cite{calvo2013shapley,kamijo2009two,owen1977values}, where the rules governing cooperation among group members differ from those governing interactions among group members. A game $(M,v(\overline{\*x},\*x',f))$ with a group structure $B \in \mathcal{B}(M)$ is denoted by $(B,M,v(\overline{\*x},\*x',f))$. Define $L = \{1, \dots, l\}$ and 
\begin{align}
    v_B(\overline{\*x},\*x',f;T) = v(\overline{\*x},\*x',f;\cup_{i \in T} B_i), \forall T \subseteq L.
\end{align}
Define $(L,v_B(\overline{\*x},\*x',f))$ as the game induced by $(B,M,v(\overline{\*x},\*x',f))$ by considering the unions of $B$ as players. We refer to this as group BShap (GShap) and the $i$-th attribution is calculated as
\begin{align} \label{eq:GBShap}
    \text{GS}_{B_i} (\overline{\*x},\*x',f) &= \sum_{T \subseteq L \backslash i} c(T) \Delta v_B(T;i) \\  
    c(T) &= \frac{|T|!(|L|-|T|-1)!}{|L|!}, \\
    \Delta v_B(T;i) &= v_B(\overline{\*x},\*x',f;T \cup i)-v_B(\overline{\*x},\*x',f;T).
\end{align}
GShap is determined in the same manner as Shapley value. Completeness and linearity are preserved. In addition, it takes into account the group version of the dummy and symmetry. 
\begin{itemize}
    \item Group dummy: If $v_B(T \cup i) = v_B(T) \ \forall T \subseteq L \backslash i$, then $\text{GS}_{B_i}=0$. 
    \item Group symmetry: If $v_B(T \cup i) = v_B(T \cup j) \ \forall T \subseteq L \backslash \{i,j\}$, then $\text{GS}_{B_i} = \text{GS}_{B_j}$.
\end{itemize}
% \begin{definition}[Group dummy]
%     If $v_B(T \cup i) = v_B(T), \forall T \subseteq L \backslash i$, then $\text{GS}_{B_i}=0$. 
% \end{definition}
% \begin{definition}[Group symmetry]
%     If $v_B(T \cup i) = v_B(T \cup j), \forall T \subseteq L \backslash \{i,j\}$, then $\text{GS}_{B_i} = \text{GS}_{B_j}$.
% \end{definition}
GShap's theoretical properties are discussed in \cite{hu2018new}. An example of a demonstration is provided below.

\begin{example} \label{eg:GShap}

Suppose we have $M =\{1,2,3\}$, $B = \{\{1,2\},3\}$, then the characteristic function $v_B$ only consider $v(\{\emptyset\})$, $v(\{1,2\})$, $v(\{3\})$, and $v(\{1,2,3\})$. GShap calculates that
\begin{align*}
    \text{GS}_{\{1,2\}} &= \frac{v(\{1,2\})-v(\{\emptyset\})}{2} + \frac{v(\{1,2,3\})-v(\{3\})}{2}, \\
    \text{GS}_{\{3\}} &= \frac{v(\{3\})-v(\{\emptyset\})}{2} + \frac{v(\{1,2,3\})-v(\{1,2\})}{2}.
\end{align*}

\end{example}

It is possible to further decomposite group attributions. One of the most popular methods is Owen value \cite{owen1977values}. We will have a review below. There exist other ways for further decomposition \cite{kamijo2009two,calvo2013shapley}. However, the purpose of this work is on attributions of groups. Therefore, we only display Owen value for demonstration purposes and further discussion is beyond the scope. 

\subsubsection{Owen Value}

 For Owen value, each $\Delta v_B(T;i)$ is further decomposed by the Shapley Value by preserving completeness, linearity, dummy, and symmetry within the group. The Owen value is calculated by
\begin{align}
    \text{OW}_j &= \sum_{T \subseteq L \backslash i} \sum_{S \subseteq B_i \backslash j} d(S,T) (v(Q \cup S \cup j) - v(Q \cup S)), \\
    Q &= \cup_{k \in T} B_k, \\
    d(S,T) &= \frac{|T|!(|L|-|T|-1)!}{|L|!} \frac{|S|!(|B_i|-|S|-1)!}{|B_i|!}.
\end{align}

\begin{example} \label{eq:Owen}
    Following Example~\ref{eg:GShap}, we further split $\text{GS}_{\{1,2\}}$ by 
    \begin{align*}
        \text{OW}_1 &= \frac{v(\{1\})-v(\emptyset)}{4} + \frac{v(\{1,2\})-v(\{2\})}{4} \\
        &+ \frac{v(\{1,3\})-v(\{3\})}{4} + \frac{v(\{1,2,3\}) - v(\{2,3\})}{4}, \\
        \text{OW}_2 &= \frac{v(\{2\})-v(\emptyset)}{4} + \frac{v(\{1,2\})-v(\{1\})}{4} \\
        &+ \frac{v(\{2,3\})-v(\{3\})}{4} + \frac{v(\{1,2,3\}) - v(\{1,3\})}{4}.
    \end{align*}
\end{example}

\section{Group Axioms} \label{sec:axioms}

This section discusses what should be expected regarding attributions when group structures are involved. Axioms are naturally motivated by financial knowledge.

\subsection{Group Transformation Invariance}

Concepts of invariance have guided concepts across a variety of scientific fields. Invariance under feature transformations has been extensively studied \cite{kvinge2022ways,mroueh2015learning}. The importance of invariance in practice can be attributed to many factors, including the ability to produce more robust results, to be consistent with human intuition, and to facilitate better generalizations.
% \dan{formal definition may be difficult to give, as there are types of invariance, some of them may not be directly related to our work, but worth mentioning for generalization.}
% {\color{blue} In this section, we provide comprehensive examples to show the group invariance in our domain of practice.} 
Accordingly, we would like to preserve invariance as much as possible when explaining ML methods. This section emphasizes the importance of feature invariance in financial practices.  

\subsubsection{Group Affine Scale Invariance}

The original concept of affine scale invariance \cite{friedman1999three} refers to an affine transformation of a single feature. For instance, if $x_i$ represents annual income in dollars, the result should remain the same if the currency unit of $x_i$ is replaced by another currency instead. Using a group affine transformation of features, we generalize this concept and demonstrate it with the following example. 

\begin{example} \label{eg:GASI}
    In Credit Scoring practice, past-due payments could be split according to their duration.
    The number of past-due payments for less than three months and more than three months can be calculated using $x_1, x_2$.  Alternatively, we can use $\widetilde{x}_1, \widetilde{x}_2$ to record the total number of past-due payments and the number of past-due payments that are more than three months. Ultimately, the result should be the same. Let us assume that there is an additional feature $x_3$ that calculates the yearly income. Consequently, two past-due features may be placed in the same group, i.e. $B_1 = \{1,2\}$. In this case, $B_1$ represents all past-due information and there is an affine relationship such that
    \begin{align*}
        \left[ \begin{matrix} \widetilde{x}_1 \\ \widetilde{x}_2 \\ x_3 \end{matrix} \right] = 
        \*h(\*x;B_1,\*A,\*b) = 
        \left[ \begin{matrix} 1 & 1 & 0 \\ 0 & 1 & 0 \\ 0 & 0 & 1 \end{matrix} \right] 
        \left[ \begin{matrix} x_1 \\ x_2 \\ x_3 \end{matrix} \right] 
        = \left[ \begin{matrix}
    \*A & \begin{matrix} 0 \\ 0 \end{matrix} \\
    \begin{matrix} 0 & 0 \end{matrix} & 1
\end{matrix} \right] \left[ \begin{matrix} x_1 \\ x_2 \\ x_3 \end{matrix} \right] + 
\left[ \begin{matrix}
\begin{array}{c}
  \\[-0.7em]
  \*b \\
  \\[-0.7em]
\end{array} \\
0
\end{matrix} \right]
    \end{align*}
    where 
    \begin{align*}
        \*A = \left[ \begin{matrix} 1 & 1  \\ 0 & 1  \end{matrix} \right], 
        \ \ 
        \*b = \left[ \begin{matrix} 0 \\ 0 \end{matrix} \right].
    \end{align*}
    Note $\*A \in \mathbb{R}^{2 \times 2}$ and $\*b \in \mathbb{R}^2$ since we only act transformation on $B_1$. It is easy to determine the inverse of transformation as
    \begin{align*}
    \left[ \begin{matrix} x_1 \\ x_2 \\ x_3 \end{matrix} \right] =
        \*h^{-1}(\widetilde{\*x};B_1,\*A,\*b) = 
        \left[ \begin{matrix}
            1 & -1 & 0 \\
            0 & 1 & 0 \\
            0 & 0 & 1 
        \end{matrix} \right] 
        \left[ \begin{matrix} \widetilde{x}_1 \\ \widetilde{x}_2 \\ x_3 \end{matrix} \right] = 
        \left[ \begin{matrix} \widetilde{x}_1 - \widetilde{x}_2 \\ \widetilde{x}_2 \\ x_3 \end{matrix} \right].
    \end{align*}
    For simplicity, we use $\*h(\*x)$ or just $\*h$ for short. 
    Now consider a simple logistic regression 
    \begin{align*}
        f(x_1,x_2,x_3) &= \sigma(-10+x_1+2x_2+x_3),
    \end{align*}
    where $\sigma(x) = \frac{e^x}{1+e^x}$. Then by calculation, we have
    \begin{align*}
        g(\widetilde{x}_1,\widetilde{x}_2,x_3) &= f \circ \*h^{-1} = \sigma(-10+\widetilde{x}_1+\widetilde{x}_2+x_3).
    \end{align*}
    It is then easy to verify that
    \begin{align*}
        f(x_1,x_2,x_3) = g(\widetilde{x}_1,\widetilde{x}_2,x_3).
    \end{align*}
    As a result, regardless of how past-due features are documented, the attributions of the group should be the same,
    \begin{align*}
        \mathcal{A}_{B_1}(\overline{\*x},\*x',f) = \mathcal{A}_{B_1}(\*h(\overline{\*x}),\*h(\*x'),g).
    \end{align*}
\end{example}

\begin{definition}[\textbf{Group Affine Scale Invariance  (GASI)}] \label{def:GASI}
Suppose we partition $\*x = (\*x_{B_i},\*x_{\neg})$. Define group affine transformation as
\begin{align*}
    \*h(\*x;B_i,\*A,\*b) = \*h((\*x_{B_i},\*x_{\neg});B_i,\*A,\*b) = (\*A \*x_{B_i} + \*b, \*x_{\neg}).
\end{align*}
As a convenience, we will use $\*h(\*x)$ or $\*h$ for short. 
Then for any group $B_i$, invertible matrix $\*A$ and vector $\*b$ which implies that $\*h$ is invertible, 
\begin{align*}
    \mathcal{A}_{B_i}(\overline{\*x},\*x',f) = \mathcal{A}_{B_i}(\*h(\overline{\*x}),\*h(\*x'),f \circ \*h^{-1}).
\end{align*}
% Suppose $f(\*x_{B_i},\*x_{\neg}) = g(\*A^{-1}(\*x_{B_i}-\*b),\*x_{\neg})$. We have
% \begin{align}
%     \mathcal{A}_{B_i}(\overline{\*x},\*x',f) = \mathcal{A}_{B_i}((\*A^{-1}(\overline{\*x}_{B_i}-\*b),\*x_{\neg}),(\*A^{-1}(\*x'_{B_i}-\*b),\*x_{\neg}'),g).
% \end{align}
    
\end{definition}

\begin{proposition} \label{prop:GShap_invariant}
    GShap attributions are invariant for any invertible group transformations. 
\end{proposition}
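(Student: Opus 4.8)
The plan is to show that the characteristic function $v_B$ underlying GShap is unchanged under a group affine transformation, and then invoke the fact that GShap is a fixed deterministic formula of $v_B$ alone. Concretely, fix a group $B_i$, an invertible matrix $\*A$, a vector $\*b$, and let $\*h = \*h(\cdot;B_i,\*A,\*b)$ and $g = f \circ \*h^{-1}$. We must compare the game $(L, v_B(\overline{\*x},\*x',f))$ with the game $(L, v_B(\*h(\overline{\*x}),\*h(\*x'),g))$. Since GShap, as defined in \eqref{eq:GBShap}, depends on the explicand, baseline, and function only through the values $v_B(\cdot\,;T)$ for $T \subseteq L$, it suffices to prove $v_B(\overline{\*x},\*x',f;T) = v_B(\*h(\overline{\*x}),\*h(\*x'),g;T)$ for every $T \subseteq L$; the identical linear combination $\sum_T c(T)\Delta v_B(T;i)$ then yields $\text{GS}_{B_i}(\overline{\*x},\*x',f) = \text{GS}_{B_i}(\*h(\overline{\*x}),\*h(\*x'),g)$.

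The core step is the set-level identity. Recall $v_B(\overline{\*x},\*x',f;T) = v(\overline{\*x},\*x',f;\cup_{k\in T}B_k) = f\big((\overline{\*x})_S;(\*x')_{M\setminus S}\big)$ where $S = \cup_{k\in T}B_k$. Here I would split into two cases according to whether $i \in T$. If $i \notin T$, then $S$ is disjoint from $B_i$, so the coordinates in $B_i$ of the hybrid point $(\overline{\*x}_S;\*x'_{M\setminus S})$ are taken from the baseline; the key observation is that $\*h$ only touches the $B_i$-block and leaves $\*x_{\neg}$ fixed, and $\*h^{-1}$ restores the $B_i$-block exactly, so plugging the $\*h$-transformed hybrid point into $g = f\circ\*h^{-1}$ returns $f$ evaluated at the original hybrid point. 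The nontrivial case is $i \in T$: then $B_i \subseteq S$, so the $B_i$-block of the hybrid point for the transformed game is $\*A\overline{\*x}_{B_i}+\*b$, and applying $\*h^{-1}$ sends this back to $\overline{\*x}_{B_i}$; for the remaining coordinates of the hybrid point (all outside $B_i$), $\*h$ and $\*h^{-1}$ act as the identity. In both cases $g\big(\*h(\overline{\*x})_S;\*h(\*x')_{M\setminus S}\big) = f\big(\overline{\*x}_S;\*x'_{M\setminus S}\big)$, which is exactly $v_B(\overline{\*x},\*x',f;T) = v_B(\*h(\overline{\*x}),\*h(\*x'),g;T)$.

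The main obstacle — really the only place one must be careful — is bookkeeping the interaction between the coordinate restriction $(\cdot)_S/(\cdot)_{M\setminus S}$ and the block transformation $\*h$: one needs that for any $T$, the transformation $\*h$ maps the hybrid point $(\*y_S;\*z_{M\setminus S})$ to $(\*h(\*y)_S;\*h(\*z)_{M\setminus S})$, which holds precisely because the group structure $B$ is a partition, so the block $B_i$ is either entirely inside $S$ or entirely inside $M\setminus S$ — it is never split. This is where the hypothesis that $B$ is a coalition structure (a genuine partition) and that the transformation acts on a whole group $B_i$ is used; an affine map mixing coordinates across two different groups would break the argument, and indeed would break the proposition. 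Once this commuting property is in hand, the rest is the two-line case check above followed by the observation that GShap is a function of the induced game alone.
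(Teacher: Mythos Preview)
Your approach is essentially the paper's: show that the induced characteristic function $v_B(\cdot;T)$ is unchanged by the transformation, whence the GShap formula returns the same value. The paper's proof is a single line---$f = f\circ \*q^{-1}\circ \*q$ hence $v_B(\overline{\*x},\*x',f;T)=v_B(\*q(\overline{\*x}),\*q(\*x'),f\circ\*q^{-1};T)$---and you have usefully unpacked the step it leaves implicit, namely that the hybrid-point construction commutes with a block transformation because the partition forces $B_i\subseteq S$ or $B_i\subseteq M\setminus S$.

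One cosmetic point: the proposition is stated for \emph{any} invertible group transformation (it is later invoked for GLFI as well as GASI), yet you open by fixing an affine $\*h(\cdot;B_i,\*A,\*b)$. Your argument nowhere uses affinity---only that $\*h$ acts on the $B_i$-block alone and is invertible---so replace the opening line with an arbitrary invertible $\*q:\mathbb{R}^m\to\mathbb{R}^m$ that fixes $\*x_{\neg}$ and the proof goes through verbatim.
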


\begin{proposition} \label{prop:GASI_BShap_IG}
    GASI is preserved by IG, but not BShap.
\end{proposition}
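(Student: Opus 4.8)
The plan is to treat the two assertions separately: a short direct computation showing Integrated Gradients respects GASI, and an explicit counterexample ruling it out for Baseline Shapley.

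For IG, the key observation is that a group affine map $\*h(\*x) = (\*A\*x_{B_i}+\*b,\*x_{\neg})$ is affine on all of $\mathbb{R}^m$, so it carries the straight segment $\gamma_0(t) = \*x' + t(\overline{\*x}-\*x')$ to $t \mapsto \*h(\*x') + t(\*h(\overline{\*x})-\*h(\*x')) = \*h(\gamma_0(t))$; that is, the IG integration path for $g := f\circ \*h^{-1}$ with explicand $\*h(\overline{\*x})$ and baseline $\*h(\*x')$ is exactly the $\*h$-image of the path for $f$. Writing the group attribution as the integral of a directional derivative, $\text{IG}_{B_i}(f) = \int_0^1 \nabla_{\!B_i} f(\gamma_0(t))^{\top}(\overline{\*x}_{B_i}-\*x'_{B_i})\,dt$, I evaluate the same quantity for $g$: the group-block displacement transforms as $\*A(\overline{\*x}_{B_i}-\*x'_{B_i})$, while the chain rule — using that $\*h$ is the identity off the group and couples no outside coordinate into the group — gives $\nabla_{\!B_i} f(\*x) = \*A^{\top}(\nabla_{\!B_i} g)(\*h(\*x))$, hence $(\nabla_{\!B_i} g)(\*h(\gamma_0(t))) = \*A^{-\top}\nabla_{\!B_i} f(\gamma_0(t))$. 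Substituting, the factors $\*A^{-1}$ and $\*A$ cancel and $\text{IG}_{B_i}(g) = \text{IG}_{B_i}(f)$. This is just the group form of the ASI argument of \cite{sundararajan2017axiomatic}.

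For BShap I exhibit a counterexample with a genuine higher-order interaction across the group boundary. Take $m=3$, $B_1 = \{1,2\}$, $f(\*x) = x_1x_2x_3$, baseline $\*x' = \*0$, explicand $\overline{\*x} = (1,1,1)$, and the group affine transformation with $\*A = \left[\begin{smallmatrix}1 & 1\\ 0 & 1\end{smallmatrix}\right]$, $\*b = \*0$ (the transformation of Example~\ref{eg:GASI}), so that $g(\*x) = (x_1-x_2)\,x_2\,x_3$ with transformed explicand and baseline $(2,1,1)$ and $\*0$. Evaluating the $2^3$ values of the characteristic function in each representation and plugging into the Shapley formula yields $\text{BS}_{B_1}(f) = \text{BS}_1(f) + \text{BS}_2(f) = \tfrac23$, whereas $\text{BS}_{B_1}(g) = \text{BS}_1(g) + \text{BS}_2(g) = \tfrac56$; hence GASI fails for BShap.

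The IG direction is routine once the path-image observation is made — the only care needed is keeping the Jacobian block structure straight so the $\*A$ factors cancel. The genuinely delicate part is the counterexample: since BShap already satisfies individual ASI (axiom 5), a diagonal $\*A$ leaves each individual attribution, and hence their sum, unchanged, so the transformation must be non-diagonal; moreover $f$ must be nonlinear within the group \emph{and} coupled to an outside feature, since for separable or within-group-affine $f$ the two representations spuriously agree (e.g. $x_1x_2 + x_3$ does not separate them, but the multilinear interaction $x_1x_2x_3$ does). Establishing that such an example exists — rather than a cleaner structural proof of non-invariance — is the main obstacle.
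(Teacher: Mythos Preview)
Your proof is correct, and both halves take somewhat different routes from the paper.

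For IG, the paper argues indirectly: it first checks that for any feature $i$ \emph{outside} the transformed group $B_j$, the individual attribution $\text{IG}_i$ is unchanged under $\*h$ (since $\*h$ is the identity on that coordinate and, being affine, maps the straight-line path to the straight-line path); invariance of the group sum $\text{IG}_{B_j}$ then follows from completeness and $f(\overline{\*x})-f(\*x')=g(\*h(\overline{\*x}))-g(\*h(\*x'))$. Your argument instead attacks the group attribution head-on: you write $\text{IG}_{B_i}$ as a directional-derivative integral along the path, observe that the displacement vector picks up an $\*A$ while the block-gradient picks up an $\*A^{-\top}$ via the chain rule, and cancel. Your route is shorter and makes transparent \emph{why} invariance holds (the Jacobian factors cancel), while the paper's completeness route is more modular and avoids any matrix bookkeeping.

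For BShap, both arguments are by counterexample. The paper uses the logistic model $f=\sigma(-10+x_1+2x_2+x_3)$ from Example~\ref{eg:GASI} with $\overline{\*x}=(0,40,20)$ and obtains $\text{BS}_{B_1}\approx\tfrac12$ versus $\approx\tfrac23$ after the transformation. Your choice $f=x_1x_2x_3$ with $\overline{\*x}=(1,1,1)$ and the same $\*A$ gives exact values $\tfrac23$ versus $\tfrac56$, which is tidier and requires no approximation. Your closing remark that a non-diagonal $\*A$ and a genuine cross-group interaction are needed (since BShap already satisfies single-feature ASI) is a nice diagnostic that the paper does not make explicit.
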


\begin{example}
    For BShap, we show this by a counterexample. We consider Example~\ref{eg:GASI} with a specific case
    \begin{align*}
        \overline{\*x}  = (0,40,20), \ \ \*x' = (0,0,0).
    \end{align*}
    For this case, the characteristic function is given by
    \begin{align*}
        & v(\overline{\*x},\*x',f;\{\emptyset\}) \approx 0, v(\overline{\*x},\*x',f; \{1,2,3\}) \approx 1, \\
        & v(\overline{\*x},\*x',f;\{1\}) = 0, v(\overline{\*x},\*x',f;\{2\}) \approx 1, v(\overline{\*x},\*x',f;\{3\}) \approx 1, \\
        & v(\overline{\*x},\*x',f;\{1,2\}) \approx 1, v(\overline{\*x},\*x',f;\{1,3\}) \approx 1, v(\overline{\*x},\*x',f;\{2,3\}) \approx 1.
    \end{align*}
    % By calculation, BShap yields that
    % \begin{align*}
    %     \text{BS}_{B_1}(\overline{\*x},\*x',f) \approx \frac{1}{2} \neq \frac{2}{3} \approx \text{BS}_{B_1}(\*h(\*x),\*h(\*x'),g).
    % \end{align*}
    As a result, from the perspective of $f$, $\overline{x}_1$ is dummy, $\overline{x}_2$ and $\overline{x}_3$ are almost symmetric, we have
$        \text{BS}_{B_1}(\overline{\*x},\*x',f) \approx \frac{1}{2}. 
$
    By group affine transformation, we have 
$
        \*h(\overline{\*x}) = (20,20,20).
$
    From the perspective of $g$, $x_1,x_2,x_3$ are perfectly symmetrical. Thus, 
$
        \text{BS}_{B_1}(\*h(\overline{\*x}),\*h(\*x'),g) \approx \frac{2}{3}. 
$
    It is evident that two different ways of recording features result in significantly different results, which is undesirable. 
    
\end{example}

\begin{remark}
    The GASI provides some insights into how features should be grouped. The GASI is not always a concern. As in Example~\ref{eg:GASI}, the feature that counts the number of past-due payments and the feature that calculates annual income have different units, so we do not need to consider their group affine transformation. 
    When features describe the same characteristics (such as how many past-due payments are, and how many inquiries are), and have the same unit, it is important to consider the alternative representation of features through the group affine transformation. If there are no unique reasonable ways to represent features, group representation is recommended.
    % {\color{blue} next we can think of how to simplify the words}
\end{remark}

\subsubsection{Group linear fractional transformation}

The Affine transformation is not the only transformation that is used in practice. Ratios are often used in finance in order to facilitate better understanding and provide better results when building models. 

\begin{example} \label{eg:GLFI}
    Consider the company bankruptcy prediction. It is common for such predictions to include some features expressed in the fractional form \cite{altman1968financial}. For instance, suppose there are three features $x_1,x_2,x_3$ that represent total assets, retained earnings, and sales. Then in practice, more likely people will use $x_1,\widetilde{x}_2,x_3$ instead, whereas $\widetilde{x}_2 = \frac{x_2}{x_1}$ calculates the profitability ratio that reflects the age and earning capacity of the company. It is hence natural to consider $B_1 = \{1,2\}$ because $\widetilde{x}_2$ is calculated using $x_1$ and $x_2$. To reflect this, a linear fractional transformation can be defined as follows:
    \begin{align*}
        \left[ \begin{matrix} \widetilde{x}_1 \\ \widetilde{x}_2 \\ x_3 \end{matrix} \right] = \*p(\*x;B_1,\*A,\*B,\*c,\*d) = 
        \left[ \begin{matrix}  x_1 \\ \frac{x_2}{x_1} \\ x_3 \end{matrix} \right],
    \end{align*}
    where 
    \begin{align*}
        \*A = \left[ \begin{matrix} 1 & 0 \\ 0 & 1 \end{matrix} \right], 
        \*B = \left[ \begin{matrix} 0 & 0 \\ 1 & 0 \end{matrix} \right], 
        \*c = \left[ \begin{matrix} 0 \\ 0 \end{matrix} \right], 
        \*d = \left[ \begin{matrix} 1 \\ 0 \end{matrix} \right].
    \end{align*}
    That is, we calculate
    \begin{align*}
        \widetilde{\*x} = \frac{\*A \*x + \*c}{\*B \*x + \*d},
    \end{align*}
    whereas 
    %$\frac{\cdot}{\cdot}$  
    the above division is defined as the vector entry-wise ratio. It is also easy to see the inverse of the transformation
    \begin{align*}
        \left[ \begin{matrix} x_1 \\ x_2 \\ x_3 \end{matrix} \right] = \*p^{-1}(\widetilde{\*x};B_1,\*A,\*B,\*c,\*d) = \left[ \begin{matrix} \widetilde{x}_1 \\ \widetilde{x}_1 \widetilde{x}_2 \\ x_3 \end{matrix} \right].
    \end{align*}
    For convenience, we will use $\*p(\*x)$ or $\*p$ for short. Now consider a simple logistic regression 
    \begin{align*}
        f(x_1,x_2,x_3) = \sigma(-10+3x_1+4x_2+5x_3),
    \end{align*}
    where $\sigma = \frac{e^x}{1+e^x}$. Then by calculation, we have
    \begin{align*}
        g(\widetilde{x}_1,\widetilde{x}_2,x_3) = f \circ \*p^{-1} = \sigma(-10+3\widetilde{x}_1 + 4 \widetilde{x}_1 \widetilde{x}_2 + 5x_3).
    \end{align*}
    In a similar fashion to GASI, we should expect the same attributions in the two cases
    \begin{align*}
        \mathcal{A}_{B_1}(\overline{\*x},\overline{\*x}',f) = \mathcal{A}_{B_1}(\*p(\overline{\*x}),\*p(\overline{\*x}'),g).
    \end{align*}
    
\end{example}

\begin{definition}[\textbf{Group Linear Fractional Invariance  (GLFI)}] \label{def:GLFI}
Suppose we partition $\*x = (\*x_{B_i},\*x_{\neg})$. Define linear fractional transformation as
\begin{align*}
    \*p(\*x;B_i,\*A,\*B,\*c,\*d) = \*p((\*x_{B_i},\*x_{\neg});B_i,\*A,\*B,\*c,\*d) = \left( \frac{\*A\*x_{B_i}+\*c}{\*B\*x_{B_i}+\*d}, \*x_{\neg} \right).
\end{align*}
For simplicity, we denote the transformation by $\*p(\*x)$. 
Then for any group $B_i$ and invertible transformation $\*p(\*x)$, 
\begin{align*}
    \mathcal{A}_{B_i}(\overline{\*x},\*x',f) = \mathcal{A}_{B_i} \left(\*p(\overline{\*x}),\*p(\*x'),f \circ \*p^{-1}(\*x) \right).
\end{align*}
% Suppose $f(\*x_{B_i},\*x_{\neg}) = g(\*A^{-1}(\*x_{B_i}-\*b),\*x_{\neg})$. We have
% \begin{align}
%     \mathcal{A}_{B_i}(\overline{\*x},\*x',f) = \mathcal{A}_{B_i}((\*A^{-1}(\overline{\*x}_{B_i}-\*b),\*x_{\neg}),(\*A^{-1}(\*x'_{B_i}-\*b),\*x_{\neg}'),g).
% \end{align}
    
\end{definition}

\begin{proposition}
    GShap preserves GLFI, but not BShap or IG. 
\end{proposition}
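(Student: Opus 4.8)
The statement has three parts, which I would treat in increasing order of difficulty. The GShap half is essentially a restatement of an earlier result; the BShap half re-uses a counterexample already in hand; and the IG half is the only place where real work is needed, precisely because IG, unlike BShap, does satisfy GASI (Proposition~\ref{prop:GASI_BShap_IG}), so a genuinely nonlinear transformation is required to defeat it.

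\emph{GShap preserves GLFI.} A group linear fractional transformation $\*p(\cdot\,;B_i,\*A,\*B,\*c,\*d)$ in the sense of Definition~\ref{def:GLFI} modifies only the coordinates in $B_i$, fixes $\*x_{\neg}$, and — when invertible — is a bijection of $\mathbb{R}^m$ whose inverse is again supported on $B_i$; in particular it is an invertible group transformation, so Proposition~\ref{prop:GShap_invariant} applies directly and the GShap attribution of $B_i$ is unchanged. If one prefers a self-contained argument, I would copy the proof of Proposition~\ref{prop:GShap_invariant}: for any $T \subseteq L$ the union $Q = \cup_{k \in T} B_k$ either contains $B_i$ or is disjoint from it, so when the characteristic function is evaluated at the transformed game, $g\bigl(\*p(\overline{\*x})_Q; \*p(\*x')_{M \setminus Q}\bigr)$ with $g = f \circ \*p^{-1}$, the outer $\*p^{-1}$ exactly cancels $\*p$ on $B_i$ and acts trivially elsewhere, so the value equals $f(\overline{\*x}_Q; \*x'_{M \setminus Q}) = v(\overline{\*x},\*x',f;Q)$. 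Hence the induced game $v_B$ is invariant, and since $\text{GS}_{B_i}$ depends on the game only through $v_B$, so is the attribution.

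\emph{BShap fails GLFI.} Here I would observe that the group affine map appearing in the BShap counterexample to GASI (the example immediately following Proposition~\ref{prop:GASI_BShap_IG}) is a special case of a group linear fractional transformation — namely one with $\*B = \*0$ and $\*d$ the all-ones vector, so that the denominator is identically $1$ — and it is invertible. Consequently the very same explicand, baseline, and logistic model that refute GASI for BShap also refute GLFI for BShap, with no extra computation: the two admissible recordings of the group produce BShap group attributions $\approx \tfrac12$ and $\approx \tfrac23$.

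\emph{IG fails GLFI.} This is the crux. I would take the transformation of Example~\ref{eg:GLFI}, $\widetilde x_1 = x_1$, $\widetilde x_2 = x_2/x_1$, $\widetilde x_3 = x_3$ on $B_1 = \{1,2\}$, and choose the explicand and baseline so that the whole segment joining them stays in $\{x_1 \neq 0\}$, so that $\*p$, $\*p^{-1}$, and $g = f \circ \*p^{-1}$ are smooth along both relevant paths. The first step is to reduce to a single coordinate: completeness for $f$ and for $g$, together with $g \circ \*p = f$ (hence $g(\*p(\overline{\*x})) = f(\overline{\*x})$ and $g(\*p(\*x')) = f(\*x')$), gives $\text{IG}_{B_1}(f) - \text{IG}_{B_1}(\*p(\overline{\*x}),\*p(\*x'),g) = \text{IG}_3(g) - \text{IG}_3(f)$, so it suffices to make $\text{IG}_3(f) \neq \text{IG}_3(g)$. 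The reason these differ is structural: $\text{IG}_3(f)$ integrates $\partial_3 f$ along the straight segment in $x$-space, whereas $\text{IG}_3(g)$ integrates $\partial_3 g$ along the straight segment in $\widetilde x$-space, and the nonlinear $\*p$ sends the former segment to a curved arc rather than to the latter; so as soon as $f$ couples $x_3$ with the transformed group nonlinearly, the two path integrals disagree. To make this concrete with the least bookkeeping I would use the quadratic $f(x_1,x_2,x_3) = x_2 x_3$ (for which $g(\widetilde{\*x}) = \widetilde x_1 \widetilde x_2 \widetilde x_3$), with $\*x' = (1,1,1)$ and $\overline{\*x} = (2,6,3)$: then every IG integrand is a low-degree polynomial in $t$, and direct evaluation gives $\text{IG}_{B_1}(f) = 10$ but $\text{IG}_{B_1}(\*p(\overline{\*x}),\*p(\*x'),g) = \tfrac{32}{3}$; alternatively the logistic $f$ of Example~\ref{eg:GLFI} works after an elementary numerical evaluation. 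The main obstacle, and the only step that is not bookkeeping, is exactly this IG computation — one has to keep the fractional map well-defined along the entire integration path and then evaluate the integrals — but the reduction through completeness to $\text{IG}_3$ and the choice of a polynomial $f$ make it routine.
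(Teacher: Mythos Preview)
Your argument is correct. For GShap you invoke Proposition~\ref{prop:GShap_invariant}, exactly as the paper does. For BShap and IG, however, you take a different route: the paper dispatches both at once using the logistic model of Example~\ref{eg:GLFI}, $f(\*x)=\sigma(-10+3x_1+4x_2+5x_3)$, with $\overline{\*x}=(5,5,5)$, $\*x'=(0,0,0)$, and the fractional map $\widetilde x_2=x_2/x_1$, reporting the numerical attributions $\text{BS}_{B_1}\approx 0.66$ vs.\ $0.50$ and $\text{IG}_{B_1}\approx 0.58$ vs.\ $0.49$. Your BShap argument recycles the earlier GASI counterexample by noting that an affine map is a degenerate linear fractional one, which is an economical observation the paper does not make. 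Your IG argument uses the polynomial $f=x_2x_3$ with $\*x'=(1,1,1)$ and $\overline{\*x}=(2,6,3)$, yielding the exact values $10$ vs.\ $32/3$; this buys you closed-form integrals and, by keeping the baseline away from $\{x_1=0\}$, guarantees that $\*p$ and $\*p^{-1}$ are smooth along the entire IG path --- a regularity issue the paper's choice $\*x'=(0,0,0)$ leaves unaddressed. Your completeness reduction to $\text{IG}_3$ is also a nice structural explanation that the paper's purely numerical treatment omits.
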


The invariance by GShap is included in Proposition~\ref{prop:GShap_invariant}. The counterexamples for BShap and IG are provided below.

\begin{example}
    For BShap and IG, we provide a counterexample following Example~\ref{eg:GLFI}. Consider the case
    \begin{align*}
        \overline{\*x} = (5,5,5), \*x' = (0,0,0).
    \end{align*}
    By group linear fractional transformation, we have 
    \begin{align*}
        \*p(\overline{x}) = \left(5, 1, 5 \right).
    \end{align*} 
    By calculations, we have
    \begin{align*}
        & \text{BS}_{B_1}(\overline{\*x},\overline{\*x}',f) \approx 0.66 \neq \text{BS}_{B_1}\left(\*p(\overline{\*x}),\*p(\*x'),f \circ \*p^{-1} \right) \approx 0.50, \\ 
        & \text{IG}_{B_1}(\overline{\*x},\overline{\*x}',f) \approx 0.58 \neq \text{IG}_{B_1}\left(\*p(\overline{\*x}),\*p(\*x'),f \circ \*p^{-1} \right) \approx 0.49.
    \end{align*}
    As a result of BShap and IG, we obtain significantly different group attributions. Based on the results, it appears that if we apply BShap and IG directly, it matters a lot how we calculate ratios of features at the first place. 
\end{example}

\begin{remark}
    Based on the GLFI, how we calculate ratios of features could have a significant impact on the calculation. If we obtain different values for group attributions only because ratios are used, then that would be inconsistent. Therefore, to use ratio features instead, we should group them into the same group. 
\end{remark}

\subsection{Group Monotonicity}

% {\color{blue} Intuitively mentioning what is conceptual soundness and fairness here. We need to be specific here.}
In financial practices, the monotonicity of the features plays a crucial role and should be guaranteed from a conceptual soundness and fairness perspective. 
% For example, an additional past-due payment should result in a lower credit score to punish the lateness. 
As a result, monotonic ML models have been studied extensively \cite{chen2023address,chen2022monotonic,gupta2020multidimensional}. However, even when models are strictly monotonic, attribution methods may not be able to capture monotonicity accurately \cite{friedman1999three}. Providing reasonable explanations requires an explanation that can reflect monotonicity.

\subsubsection{Group Demand Individual Monotonicity}
As per the original demand individual monotonicity principle \cite{friedman1999three}, if an individual monotonic feature is increased, the corresponding feature attribution should be increased. As illustrated in the following example, if the monotonic feature belongs to a group, then the group attribute should increase if the monotonic feature increases. 

\begin{example} \label{eg:GDIM}
    Let us assume that we are predicting the probability of a loan defaulting. We consider $x_1$ the number of past-due loan payments, $x_2$ the amount of the past-due payments owed, and $x_3$ the external credit score. In this case, both $x_1$ and $x_2$ concerns regarding the past-due information and the probability of default are individually monotonic for both features. When either of these features is increased, we would expect the group attribution to also increase since the risk regarding past-due payments has increased. 
\end{example}

\begin{definition}[\textbf{Group Demand Individual Monotonicity (GDIM)}]\label{def:GDIM}
    Suppose we partition $\*x = (x_i,\*x_{\neg})$. Suppose $f$ is individually monotonic with respect to $x_{i}$, where $i \in B_j$. We say a BAM preserves group demand individual monotonicity if 
    \begin{align}
        \mathcal{A}_{B_j}((\overline{x}_i+c,\overline{\*x}_{\neg}),\*x',f) \geq \mathcal{A}_{B_j}((\overline{x}_i,\overline{\*x}_{\neg}),\*x',f), \forall \overline{\*x} \text{ s.t. } c>0.
    \end{align}
\end{definition}

\begin{remark}
    The GDIM is a generalization of the DIM. In two extreme cases, if $B_j$ only contains a single feature, then the GDIM reduces to DIM. If $B_j$ consists of all features, then GDIM directly follows from the definition of individual monotonicity. 
\end{remark}

\begin{proposition} \label{prop:GDIM}
    GShap preserves GDIM, but not BShap or IG. 
\end{proposition}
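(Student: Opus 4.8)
Proof proposal. The plan is to treat the three claims separately: a direct termwise monotonicity argument shows that GShap preserves GDIM, and then a single three‑feature instance refutes GDIM for both BShap and IG simultaneously.

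\textbf{GShap preserves GDIM.} Fix a group $B_j$ and an index $i \in B_j$ for which $f$ is individually monotonic in $x_i$. I would expand $\text{GS}_{B_j}$ via \eqref{eq:GBShap} as the nonnegative combination $\sum_{T \subseteq L \backslash j} c(T)\,\Delta v_B(T;j)$ with $c(T) \geq 0$ and $\Delta v_B(T;j) = v_B(T \cup j) - v_B(T)$, and show each summand is nondecreasing in $\overline{x}_i$. The crucial structural fact is the dichotomy forced by the coalition index. Since $v_B(T) = f(\overline{\*x}_S; \*x'_{M \backslash S})$ with $S = \cup_{k \in T} B_k$, and since $j \notin T$ while the blocks of $B$ are pairwise disjoint, $i \in B_j$ lies outside $S$; hence the $i$‑th coordinate fed into $f$ is the fixed baseline $x_i'$, so $v_B(T)$ is simply independent of $\overline{x}_i$. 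By contrast $v_B(T \cup j) = f(\overline{\*x}_{S'}; \*x'_{M \backslash S'})$ with $S' = S \cup B_j \ni i$, so its $i$‑th coordinate is $\overline{x}_i$; individual monotonicity of $f$ makes $v_B(T \cup j)$ nondecreasing in $\overline{x}_i$ with every other input held fixed. Therefore each $\Delta v_B(T;j)$ is nondecreasing in $\overline{x}_i$, and summing against $c(T) \geq 0$ gives $\text{GS}_{B_j}((\overline{x}_i + c, \overline{\*x}_{\neg}), \*x', f) \geq \text{GS}_{B_j}((\overline{x}_i, \overline{\*x}_{\neg}), \*x', f)$ for all $c > 0$, i.e.\ GDIM.

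\textbf{BShap and IG fail GDIM.} I would take $M = \{1,2,3\}$, $B = \{\{1,2\},\{3\}\}$, baseline $\*x' = \*0$, and $f(x_1,x_2,x_3) = x_1 \max\{0,\,x_3 - x_2\}$, which is individually monotonic in $x_1 \in B_1$ since $\partial_1 f = \max\{0,\,x_3 - x_2\} \geq 0$ everywhere. The design principle is that the marginal value of $x_1$ is active only where $x_2 < x_3$, a region in which raising $x_2$ decreases $f$; so turning on $\overline{x}_1$ loads negative marginal contributions onto its groupmate $x_2$ faster than it raises $x_1$'s own attribution, and the two cancel in the wrong direction. Concretely, evaluating the characteristic function at $\overline{\*x} = (\overline{x}_1, 4, 5)$, $\*x' = \*0$ and running the usual Shapley bookkeeping gives $\text{BS}_1 = \tfrac{7}{6}\overline{x}_1$, $\text{BS}_2 = -\tfrac{4}{3}\overline{x}_1$, hence $\text{BS}_{B_1} = -\tfrac{1}{6}\overline{x}_1$; similarly $\text{IG}_1 = \tfrac{1}{2}\overline{x}_1(\overline{x}_3-\overline{x}_2) = \tfrac{1}{2}\overline{x}_1$ and $\text{IG}_2 = -\tfrac{1}{2}\overline{x}_1\overline{x}_2 = -2\overline{x}_1$ give $\text{IG}_{B_1} = -\tfrac{3}{2}\overline{x}_1$. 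Both group attributions are strictly decreasing in $\overline{x}_1 \geq 0$ while $f$ is individually monotonic in $x_1$, contradicting Definition~\ref{def:GDIM}. If a smooth witness is preferred, replacing $\max\{0,t\}$ by a softplus yields a $C^\infty$ example with the same behavior.

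The Shapley and IG computations in the second part are routine. The step that needs care is the disjointness/containment argument in the first part — verifying that ``$j \notin T$'' is exactly what forces $i \notin \cup_{k \in T} B_k$, so that $v_B(T)$ is constant in $\overline{x}_i$, while simultaneously $B_j \subseteq \cup_{k \in T \cup j} B_k$ carries the monotone dependence. This is precisely where aggregating over a group ``heals'' the monotonicity that the individual BShap/IG attributions lose after summation.
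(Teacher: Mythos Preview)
Your proof is correct. The positive GShap half is the paper's argument spelled out in more detail: both reduce to the observation that for every $T \subseteq L \setminus \{j\}$ the term $v_B(T)$ is constant in $\overline{x}_i$ (since $i \in B_j$ is absent from $\cup_{k \in T} B_k$) while $v_B(T \cup j)$ is nondecreasing in $\overline{x}_i$ by individual monotonicity, and the coefficients $c(T)$ are nonnegative.

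For the negative half you take a genuinely different route. The paper argues that IG already fails DIM (the singleton special case of GDIM), and for BShap it first derives a general formula for $\partial(\text{BS}_1+\text{BS}_2)/\partial x_1$ to isolate the offending cross‑partial $\partial^2 f/\partial x_1\partial x_3$, then exhibits the smooth sigmoid $f(x_1,x_2,x_3)=\sigma(-10+10x_1+10x_2-10x_3)$ and compares numerically computed group attributions at two explicands. Your piecewise‑linear witness $f(x_1,x_2,x_3)=x_1\max\{0,x_3-x_2\}$ handles BShap and IG simultaneously and yields exact closed‑form group attributions $\text{BS}_{B_1}=-\tfrac{1}{6}\overline{x}_1$ and $\text{IG}_{B_1}=-\tfrac{3}{2}\overline{x}_1$ that are visibly strictly decreasing in $\overline{x}_1$; this is cleaner and makes the failure mechanism (the interaction between $x_1$ and the out‑of‑group feature $x_3$) completely explicit. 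The paper's example has the compensating advantages of being $C^\infty$ and resembling an actual logistic credit‑scoring model; your softplus remark closes the smoothness gap.
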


\begin{example} 

    IG does not preserve DIM, therefore GDIM is not preserved as DIM is a special case of GDIM. Due to the change in the integral path, the IG is unable to preserve DIM. 
    
    For BShap, we examine a simple three-dimensional case in more detail. Assume we are interested in $x_1$ and that $f$ is monotonic with respect to $x_1$. There is a group structure that $B_1 = \{1,2\}$. 
    % By BShap, we have
    % \begin{align*}
    %     \text{BS}_1 &= \frac{f(\overline{x}_1,x_2',x_3')-f(x_1',x_2',x_3')}{3} +\frac{f(\overline{x}_1,\overline{x}_2,\overline{x}_3)-f(x_1',\overline{x}_2,\overline{x}_3)}{3} \\
    %     &+ \frac{f(\overline{x}_1,\overline{x}_2,x_3')-f(x_1',\overline{x}_2,x_3')}{6} + 
    %     \frac{f(\overline{x}_1,x_2',\overline{x}_3)-f(x_1',x_2',\overline{x}_3)}{6}, \\
    %     \text{BS}_2 &= \frac{f(x_1',\overline{x}_2,x_3')-f(x_1',x_2',x_3')}{3} +\frac{f(\overline{x}_1,\overline{x}_2,\overline{x}_3)-f(\overline{x}_1,x_2',\overline{x}_3)}{3} \\
    %     &+ \frac{f(\overline{x}_1,\overline{x}_2,x_3')-f(\overline{x}_1,x_2',x_3')}{6} + 
    %     \frac{f(x_1',\overline{x}_2,\overline{x}_3)-f(x_1',x_2',\overline{x}_3)}{6}.
    % \end{align*}
    We are interested in how the change to $\overline{x}_1$ would affect the attribution. As a convenience, let us assume that $f$ can be differentiated. Based on our calculations, we have
    \begin{align*}
        \frac{\partial \text{BS}_1}{\partial x_1}(\overline{\*x},\*x',f) &=\frac{1}{3} \frac{\partial f(\overline{x}_1,x_2',x_3')}{\partial x_1} + \frac{1}{3} \frac{\partial f(\overline{x}_1,\overline{x}_2,\overline{x}_3)}{\partial x_1} \\
        &+ \frac{1}{6} \frac{\partial f(\overline{x}_1,\overline{x}_2,x_3')}{\partial x_1}
        + \frac{1}{6} \frac{\partial f(\overline{x}_1,x_2',\overline{x}_3)}{\partial x_1}, \\
        \frac{\partial \text{BS}_2}{\partial x_1}(\overline{\*x},\*x',f) &=
        \frac{1}{3} \frac{\partial f(\overline{x}_1,\overline{x}_2,\overline{x}_3)}{\partial x_1} - \frac{1}{3} \frac{\partial f(\overline{x}_1,x_2',\overline{x}_3)}{\partial x_1} \\
        &+ \frac{1}{6} \frac{\partial f(\overline{x}_1,\overline{x}_2,x_3')}{\partial x_1}
        - \frac{1}{6} \frac{\partial f(\overline{x}_1,x_2',x_3')}{\partial x_1}.
    \end{align*}
    Based on this interpretation, we can say that when we increase $\overline{x}_1$, $\text{BS}_1$ also increases. If $\frac{\partial^2 f(\*x)}{\partial x_1 x_2} \geq 0$, then $\frac{\partial \text{BS}_2}{\partial x_1} \geq 0$. Our intuition supports these interpretations. However, if we combine the results, we arrive at an unfavorable result
    \begin{align*}
        \frac{\partial (\text{BS}_1+\text{BS}_2)(\overline{\*x},\*x',f)}{\partial x_1} &= \frac{2}{3} \frac{\partial f(\overline{x}_1,\overline{x}_2,\overline{x}_3)}{\partial x_1} + \frac{1}{3} \frac{\partial f(\overline{x}_1,\overline{x}_2,x_3')}{\partial x_1} \\
        &+ \frac{1}{6} \frac{\partial f(\overline{x}_1,x_2',x_3')}{\partial x_1} - \frac{1}{6} \frac{\partial f(\overline{x}_1,x_2',\overline{x}_3)}{\partial x_1}.
    \end{align*}
    As a result, the response of the group attribution to the change in $\overline{x}_1$ would now depend on $\frac{\partial^2 f(\*x)}{\partial x_1 \partial x_3}$. If we have a large $\frac{\partial^2 f(\*x)}{\partial x_1 \partial x_3}$, then an increase of $\overline{x}_1$ may lead to a decrease of $\text{BS}_1 + \text{BS}_2$, which is counterintuitive. 

    Suppose $z=g(x_1,x_2)$ where $f$ has individual monotonicity with respect to $g$, and $g$ has individual monotonicity with respect to $x_1$. It is therefore natural to expect an increase in $\overline{x}_1$ to increase $\text{BS}_1+\text{BS}_2$. Specifically,  consider a simple three-dimensional example
    \begin{align*}
        f(x_1,x_2,x_3) = \sigma(-10 + 10x_1 + 10x_2 - 10x_3),
    \end{align*}
    where $\sigma(x) = \frac{e^x}{1+e^x}$ and $\*x' = (0,0,0)$. Consider two explicands $(2,2,2)$ and $(4,2,2)$. By calculation, we obtain that
    \begin{align*}
        & \text{BS}_{B_1}((2,2,2),\*x',f) \approx 1.33 \geq 
        \text{BS}_{B_1}((4,2,2),\*x',f) \approx 1.16, \\
        & \text{IG}_{B_1}((2,2,2),\*x',f) \approx 2.00 \geq 
        \text{IG}_{B_1}((4,2,2),\*x',f) \approx 1.50        
                                          , \\
        & \text{GS}_{B_1}((2,2,2),\*x',f) \approx 1.00 \leq \text{GS}_{B_1}((4,2,2),\*x',f) \approx 1.00.
    \end{align*}
    In the content of Example~\ref{def:GDIM}, according to BShap and IG, for a candidate with more past-due payments, the overall past-due attributions could be significantly reduced, which is absurd. 
\end{example}

\begin{remark}
    In the case of GDIM, the situation is more complex. As in Example~\ref{eg:GDIM}, when we consider the number of past-due payments and the amount owing, we would expect to see a group representation since they together describe the payment status, and each additional increase in any feature increases the future payment risk, and therefore should provide a larger group attribution. However, if we consider one feature of past-due payments as well as another yearly income, then an additional past-due payment does not necessarily imply a greater attribution for their combination. 

    GDIM requires an in-depth understanding of domain knowledge. Whenever we expect a change in feature to influence the attributions of a group, these features should be grouped. Typically, this occurs when we use different features to describe the same characteristics, such as delinquency status, inquiry status, and trade frequency. 
\end{remark}

\subsubsection{Group Strong Pairwise Monotonicity}

The motivation of the demand individual monotonicity only applies to changes to a single feature. As shown below, pairwise monotonicity requires consideration of a change for a pair of features.

\begin{example} \label{eg:GSPM}
    Consider Example~\ref{eg:GASI}. Let us assume that a client has a past-due payment of less than three months ($\overline{x}_1$). The client does not pay back within three months, so the payment is more than three months overdue ($\overline{x}_2$). Thus, for this client, $\overline{x}_1$ decreases by one, and $\overline{x}_2$ increases by one. As a result of the increased risk, we would expect the past-due group attributions to increase. 
\end{example}

\begin{definition}[\textbf{Group Strong Pairwise Monotonicity (GSPM)}]
    Suppose $f$ is strongly monotonic with respect to $x_{i}$ over $x_{j}$. Suppose we partition $\*x = (x_{i},x_{j},\*x_{\neg})$ and $i,j \in B_k$. For an explicand $\overline{\*x}$ with $\overline{x}_i > x_i'$ and $\overline{x}_j>x_j'$. Then we say a BAM preserves group average strong pairwise monotonicity if $\forall \overline{\*x}$ and $c>0$, 
    \begin{align}
        \mathcal{A}_{B_k}((\overline{x}_{i}, \overline{x}_{j}+c,\overline{\*x}_{\neg}),\*x',f) \leq \mathcal{A}_{B_k}((\overline{x}_{i}+c, \overline{x}_{j},\overline{\*x}_{\neg}),\*x',f).
    \end{align}
\end{definition}

\begin{proposition} \label{prop:GSPM}
    GShap preserves GSPM, but not BShap or IG.
\end{proposition}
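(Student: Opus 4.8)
The plan is to treat the three assertions separately: the positive one for GShap follows by a direct structural argument, while the two negative ones for BShap and IG follow from a single three‑dimensional logistic‑regression counterexample in the spirit of Examples~\ref{eg:GASI} and~\ref{eg:GSPM}. For GShap, let $k$ be the index with $i,j\in B_k$ and recall that, writing $v_B(S):=v_B(\overline{\*x},\*x',f;S)$, one has $\text{GS}_{B_k}=\sum_{T\subseteq L\backslash k}c(T)\big(v_B(T\cup k)-v_B(T)\big)$ with all weights $c(T)\geq 0$. The key observation -- the same one behind Proposition~\ref{prop:GShap_invariant} -- is that $v_B(S)=f(\overline{\*x}_U;\*x'_{M\backslash U})$ with $U=\cup_{p\in S}B_p$, so a group is always evaluated entirely ``in'' or entirely ``out'', and in particular the coordinates $x_i,x_j$ of $B_k$ always move together. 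Comparing the two explicands $\overline{\*x}^{(1)}=(\overline{x}_i,\overline{x}_j+c,\overline{\*x}_{\neg})$ and $\overline{\*x}^{(2)}=(\overline{x}_i+c,\overline{x}_j,\overline{\*x}_{\neg})$ and writing $v_B^{(1)},v_B^{(2)}$ for the two characteristic functions: for every $T$ in the sum (so $k\notin T$) we have $i,j\notin U$, hence $x_i,x_j$ sit at their baseline values and $v_B^{(1)}(T)=v_B^{(2)}(T)$; for $T\cup k$ we have $B_k\subseteq U$, so $x_i,x_j$ take explicand values while every other coordinate coincides between $\overline{\*x}^{(1)}$ and $\overline{\*x}^{(2)}$, and strong pairwise monotonicity of $f$ (Definition~\ref{def:strong_mono}) gives $v_B^{(1)}(T\cup k)\leq v_B^{(2)}(T\cup k)$. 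Summing over $T$ against the nonnegative weights $c(T)$ yields $\text{GS}_{B_k}(\overline{\*x}^{(1)},\*x',f)\leq\text{GS}_{B_k}(\overline{\*x}^{(2)},\*x',f)$, which is exactly GSPM (and the positivity hypothesis $\overline{x}_i>x_i'$, $\overline{x}_j>x_j'$ is not even used); the same cancellation fails for an individual BAM, where $v(S)$ is evaluated on sets $S$ containing $i$ but not $j$.

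For BShap and IG, I would take $B_1=\{1,2\}$ with $x_1$ more significant than $x_2$, an out‑of‑group feature $x_3$, baseline $\*x'=\*0$, and $f(x_1,x_2,x_3)=\sigma(\beta_0+\beta_1x_1+\beta_2x_2+\beta_3x_3)$ with $\beta_1>\beta_2>0$, which makes $f$ strongly monotonic in $x_1$ over $x_2$ (since $\sigma$ is increasing and $\beta_2c<\beta_1c$), and $\beta_3<0$ with $|\beta_3|$ large, so that $x_3$ is strongly entangled with $x_1,x_2$ through the curvature of $\sigma$ (as with the $-10x_3$ term in the example after Proposition~\ref{prop:GDIM}). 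For a suitable explicand $\overline{\*x}$ with positive coordinates and increment $c>0$ I would compute $\text{BS}_{B_1}=\text{BS}_1+\text{BS}_2$ and $\text{IG}_{B_1}=\text{IG}_1+\text{IG}_2$ at $(\overline{x}_1,\overline{x}_2+c,\overline{x}_3)$ and at $(\overline{x}_1+c,\overline{x}_2,\overline{x}_3)$ and report a violation $\text{BS}_{B_1}((\overline{x}_1,\overline{x}_2+c,\overline{x}_3),\*x',f)>\text{BS}_{B_1}((\overline{x}_1+c,\overline{x}_2,\overline{x}_3),\*x',f)$ and likewise for IG, presented as the Example following this proposition. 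To see that such a violation is available rather than accidental, differentiate along $s\mapsto(\overline{x}_1+s,\overline{x}_2+c-s,\overline{x}_3)$: the derivative of $\text{BS}_{B_1}$ is $\partial_{x_1}\text{BS}_{B_1}-\partial_{x_2}\text{BS}_{B_1}$, and substituting the explicit weighted‑gradient formula for $\partial(\text{BS}_1+\text{BS}_2)/\partial x_1$ derived in the example after Proposition~\ref{prop:GDIM} (together with its mirror image in $x_2$) leaves, besides manifestly nonnegative terms proportional to $\partial_1f-\partial_2f$, a residual built from differences of $\partial_1f,\partial_2f$ evaluated at $x_3=\overline{x}_3$ versus $x_3=0$, i.e.\ contributions of order $\beta_1\beta_3\!\int\!\sigma''$ and $\beta_2\beta_3\!\int\!\sigma''$ that need not be dominated, so the integral over $s\in[0,c]$ can be negative. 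For IG the same example is checked directly, the mechanism again being that the out‑of‑group coordinate $x_3$ enters $\partial_1f$ and $\partial_2f$ all along the straight‑line integration path -- the very reason IG already fails DIM.

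The main obstacle is the calibration in this negative part: the GShap direction is bookkeeping once the ``group in or out'' observation is in hand, but picking $\beta_0,\beta_1,\beta_2,\beta_3$, the explicand, and $c$ so that the destabilizing $x_3$‑curvature contribution genuinely outweighs the stabilizing pairwise terms at the chosen point -- simultaneously for BShap and for IG, and without letting $f$ saturate into triviality -- is where the actual effort lies.
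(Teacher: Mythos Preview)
Your proposal is correct and follows essentially the same approach as the paper. Your GShap argument is the termwise application of strong pairwise monotonicity via the ``group in or out'' observation, which is exactly the paper's appendix proof (yours is in fact more carefully written, since you make the cancellation $v_B^{(1)}(T)=v_B^{(2)}(T)$ explicit). For the negative parts, the paper does precisely what you sketch: it gives a three-dimensional logistic example $f(x_1,x_2,x_3)=\sigma(-10+5x_1+10x_2-10x_3)$ with $B_1=\{1,2\}$, $\*x'=\*0$, and two explicands differing by a pairwise shift, then reports numerical values showing $\text{BS}_{B_1}$ and $\text{IG}_{B_1}$ move the wrong way while $\text{GS}_{B_1}$ does not---so your ``calibration obstacle'' is resolved simply by committing to concrete coefficients and explicands and computing.
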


\begin{example} \label{eq:GSPM_fail}
    Following Example~\ref{eg:GSPM}, consider a simple three-dimensional example
    \begin{align*}
        f(x_1,x_2,x_3) = \sigma(-10 + 5x_1 + 10x_2 - 10x_3),
    \end{align*}
    where $\sigma(x) = \frac{e^x}{1+e^x}$ and $\*x' = (0,0,0)$. Consider the group partition that $B_1 = \{1,2\}$ and $B_2 = \{3\}$ and in $B_1$, $f$ is strongly monotonic with respect to $x_2$ over $x_1$. Consider two explicands $(3,3,2)$ and $(0,6,2)$. As a result of longer past-due payments, $(0,6,2)$ carries a greater risk. As a result, the group attributions for past-due amounts should be greater. The calculation, however, reveals that
    \begin{align*}
        & \text{BS}_{B_1}((3,3,2),\*x',f) \approx 1.25 \geq 
        \text{BS}_{B_1}((0,6,2),\*x',f) \approx 1.00, \\
        & \text{IG}_{B_1}((3,3,2),\*x',f) \approx 1.80 \geq 
        \text{IG}_{B_1}((0,6,2),\*x',f) \approx 1.50, \\
        & \text{GS}_{B_1}((3,3,2),\*x',f) \approx 1.00 \leq \text{GS}_{B_1}((0,6,2),\*x',f) \approx 1.00. 
    \end{align*}
    BShap and IG have decreased the attributions for a client with a poor payment status, which is not reasonable. 
\end{example}

\begin{remark}
    If features are strongly pairwise monotonic, they must be considered in the same group. 
\end{remark}

\subsection{Summary of Axioms}

We summarize the results for the preservation of knowledge-inspired axioms by BShap, IG, and GShap in Table~\ref{tab:summary}. As a result of the preservation of axioms when group structures are present, GShap is preferred over BShap and IG.

\begin{table}[ht]
    \centering
\caption{Preservation of knowledge-inspired axioms by BAMs}
\label{tab:summary}
\begin{tabular}{cccccccccccc}
\toprule
% \midrule
 BAM$\backslash$Axioms & ASI & DIM & GASI & GLFI & GDIM & GSPM \\ \hline
BShap & Yes & Yes & No & No & No & No \\
IG & Yes & No & Yes & No & No & No \\
GShap & Yes & Yes & Yes & Yes & Yes & Yes \\
\bottomrule
\end{tabular}
\end{table}

% \begin{table}[ht]
%     \centering
% % \setlength{\tabcolsep}{0.2mm}
% \caption{Preservation of group axioms by BAMs}
% \label{tab:summary}
% \begin{tabular}{cccccccccccc}
% \toprule
% % \midrule
%  Axioms & BShap & IG & GShap \\ \hline
% % Completeness & Yes & Yes & Yes\\
% % Linearity & Yes & Yes & Yes \\
% % Dummy & Yes & Yes & Yes \\
% % Symmetry & Yes & Yes & No \\
% % ASI & Yes & Yes & Yes \\
% DIM & Yes & No & Yes \\
% GASI & No & Yes & Yes \\
% GLFI & No & No & Yes \\
% GDIM & No & No & Yes \\
% GSPM & No & No & Yes \\
% \bottomrule
% \end{tabular}
% \end{table}

\section{Empirical Examples}

\subsection{Data Description}

% {\color{blue} can you proof read this part}
We examine the significance of group structure using a widely-used credit scoring dataset from Kaggle, publicly available at \footnote{https://www.kaggle.com/c/GiveMeSomeCredit/overview}. This dataset includes 10 features serving as explanatory variables. A succinct description is provided below. 

\begin{itemize}
\item $x_1-x_3$: Number of times borrower has been 90 days or more past due, 60-89 days, and 30-59 days, respectively (integer). 
% \item $x_2$: Number of times borrower has been 60-89 days past due but no worse in the last 2 years (integer).
% \item $x_3$: Number of times borrower has been 30-59 days past due but no worse in the last 2 years (integer).
\item $x_4$: Total balance on credit cards and personal lines of credit except for real estate and no installment debt such as car loans divided by the sum of credit limits (percentage).
\item $x_5$: Monthly debt payments, alimony, and living costs divided by monthly gross income (percentage). 
\item $x_6$: Monthly income (real number).
\item $x_7$: Number of open loans and lines of credit  (integer). 
\item $x_8$: Number of mortgage and real estate loans (integer). 
\item $x_{9}$: Number of dependents in the family (integer). 
\item $x_{10}$: Client's age (integer)
\item $y$: Whether it is experienced more than 90 days past due.
\end{itemize}
For simplicity, we consider the zero baselines as $\*x' = \*0$. More details of data and models are provided in Appendix~\ref{sec:model_data}.

% \subsection{Model Performance}

% We use the monotonic groves of neural additive models \cite{chen2023address} for implementation since it preserves all required monotonicity. However, this is not a unique choice and other models can certainly be applied. We enforce individual and strong pairwise monotonicity for $x_1-x_3$ and individual monotonicity for $x_6$ and $x_9$. The AUC of the model is around $85\%$, which is consistent with the literature. 
% % For the model, we consider $g(\mathbb{E}[y|\*x]) = f(\*x)$ with the architecture
% % \begin{align}
% %     f(\*x) = f_{1,2,3}(x_1,x_2,x_3) + f_4(x_4) + \dots + f_9(x_9).
% % \end{align}
% % In other words, $x_1-x_3$ are grouped, and the remaining features are handled using 1-dimensional functions. As a convenience, we only impose monotonicity on $x_1-x_3$. All functions are approximated by neural networks with one hidden layer and two neurons. 
% % For $x_1-x_3$, we enforce strong pairwise monotonicity. As a result of the domain knowledge, features that are longer past due should have greater importance. In this paper, we focus on simple architectures since there is no apparent improvement in accuracy for more complicated models, and accuracy is not the primary concern. 

\subsection{Group Structures}
In this example, we consider the group structure as 
\begin{align*}
    B = \{B_1,B_2,B_3,B_4,B_5\} = \{\{1,2,3\},\{4\},\{5,6\},\{7,8\},\{9\}\,\{10\}\}.
\end{align*}
The potential influence of axioms on each group is summarized in Table~\ref{tab:group_GSMC}. 
Specifically, we discuss why group structures should be considered for this dataset.
% {\color{blue} The purpose of this section is to illustrate the popular features in credit scoring problems, where once again we can visualize the motivation of demonstrating the group structure here.}
\begin{itemize}
    \item As for $x_1-x_3$, they all describe past-due payments but with different durations. We interpret this as past-due information as a group. GASI affects this group since there are other equivalent representations of these features. Specifically, consider a group affine transformation such that $\widetilde{x}_1=x_1, \widetilde{x}_2 = x_2+x_3$ and $\widetilde{x}_3 = x_1+x_2+x_3$. New features are interpreted as the number of past-due payments over 90 days, 60 days, and 30 days past due. There is no difference between the two representations, and we should expect the same group attribution for both. GDIM also affects this group, as any additional past-due payment should result in a greater attribution of past-due status, regardless of whether the payment is 30, 60, or 90 days overdue. Additionally, it is affected by GSPM, whenever the past-due payment is further delayed, there should be a larger attribution of past-due status. Consequently, a worse past-due payment situation, either an additional one or a worse one, should entail a larger attribution to the group. 
    \item $x_5-x_6$ represents the overall information regarding income. In $x_5$, the ratio of some monthly payments to the total income is calculated. This ratio indicates the relative ability of the customer to repay the loan, and this is often considered to be more informative in the finance field. Because of the ratio representation, the group is affected by GLFI. Mathematically, plain payments are equivalent to ratios, and we should expect the group attributions to be the same for both representations. Furthermore, it is also affected by the GDIM: if there is a decrease in monthly income or a rise in the ratio for $x_5$, we would consider a less stable repayment ability, therefore resulting in a large attribution.
    \item In $x_7-x_8$, we describe the overall loan information. There are different types of loans and people believe they have different impacts. Since all features in this group describe the number of loans, the group is affected by GASI. As a possible example, a record could be made of the number of open and closed loans or the number of open loans and the total number of loans. 
\end{itemize}

Due to the lack of natural group structures, $B_2$, $B_5$, and $B_6$ contain only single features.

\begin{table}[ht]
    \centering
\caption{ Group structure in the GSMC }
\label{tab:group_GSMC}
\begin{tabular}{cccccccccccc}
\toprule
% \midrule
 Group & Description & Related Axioms \\ \hline
 $B_1$ & Past-due information & GASI, GDIM, GSPM \\
 $B_2$ & Balance information & GDIM \\ 
 $B_3$ & Income information & GLFI, GDIM \\
 $B_4$ & Loan information & GASI \\
 $B_5$ & Number of dependents & GDIM \\ 
 $B_6$ & Age & \\
\bottomrule
\end{tabular}
\end{table}

\subsection{An Example of Explanations}

For demonstration, we compare three attributions, BShap, IG, and Owen value as a popular choice of GShap. We consider the explicand
\begin{align*}
    \overline{\*x}_1 &= \left[ \begin{matrix} 
    \{2 & 2 & 5\} & \{1.01\} & \{0.57 & 4\} & \{11 & 0\} & \{4\} & \{30\} 
    \end{matrix} \right],
\end{align*}
where we use braces to indicate group structures in Owen value. The corresponding attributions are plotted in Figure~\ref{fig:attr}. From all methods, $x_4$ dominates because the ratio between credit card balances and limits is extremely high, indicating a high level of risk. Additionally, all past-due features have high attributions because many past-due payments are associated with this explanation. Quantitative differences among the three methods can be observed, suggesting that the choice of attribution method matters in this context. 

\begin{figure}[ht]
\centering
\begin{subfigure}{.25\textwidth}
    \includegraphics[width=1\linewidth]{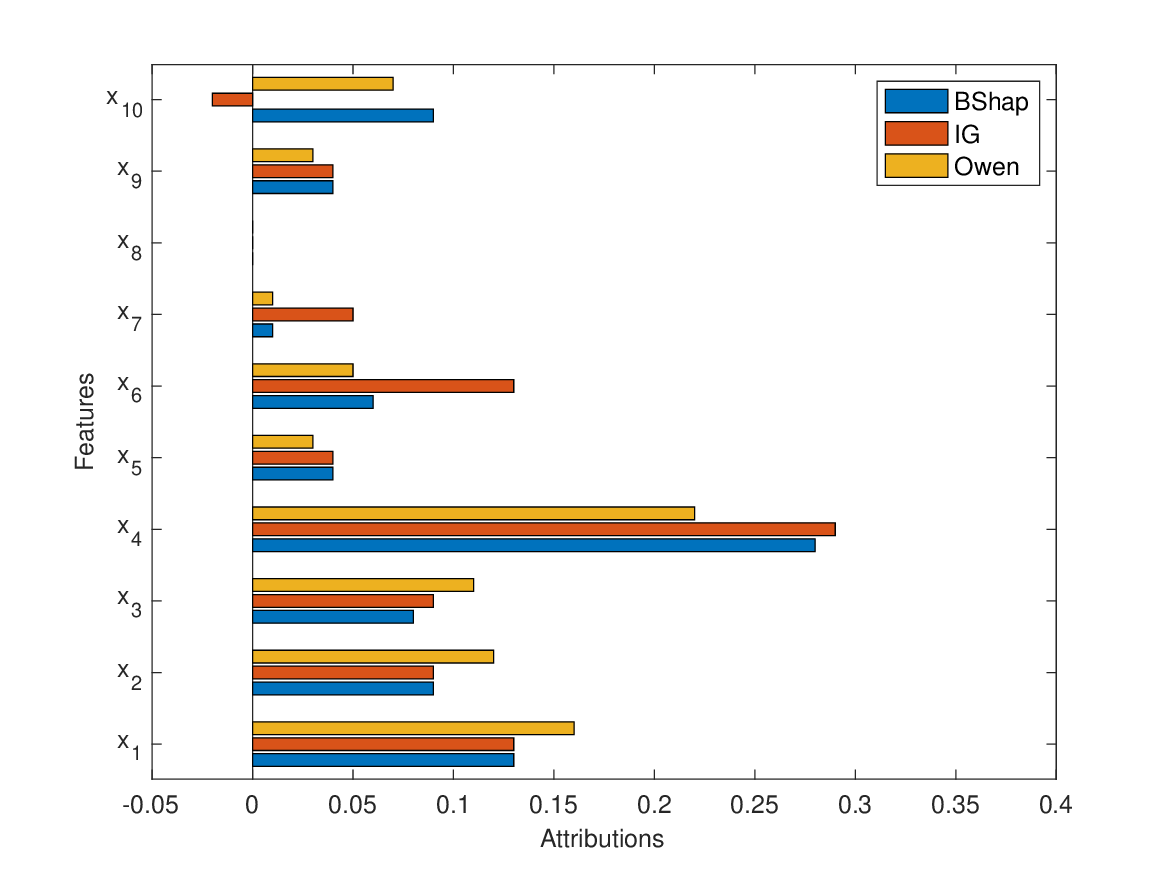}
    \caption{Attributions of $\overline{\*x}_1$}
    \label{fig:attr}
\end{subfigure}%
\begin{subfigure}{.25\textwidth}
    \includegraphics[width=1\linewidth]{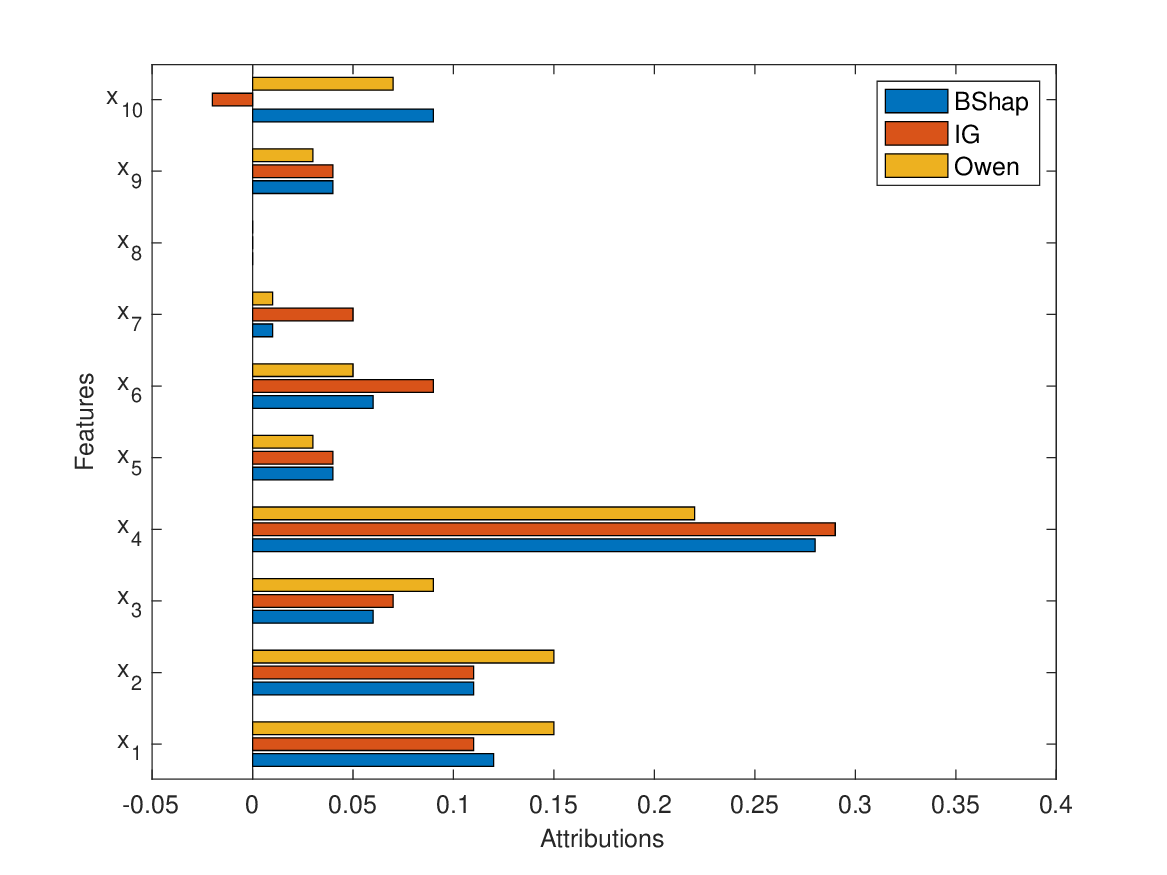}
    \caption{Attributions of $\overline{\*x}_2$}
    \label{fig:attr2}
\end{subfigure}
\begin{subfigure}{.25\textwidth}
    \includegraphics[width=1\linewidth]{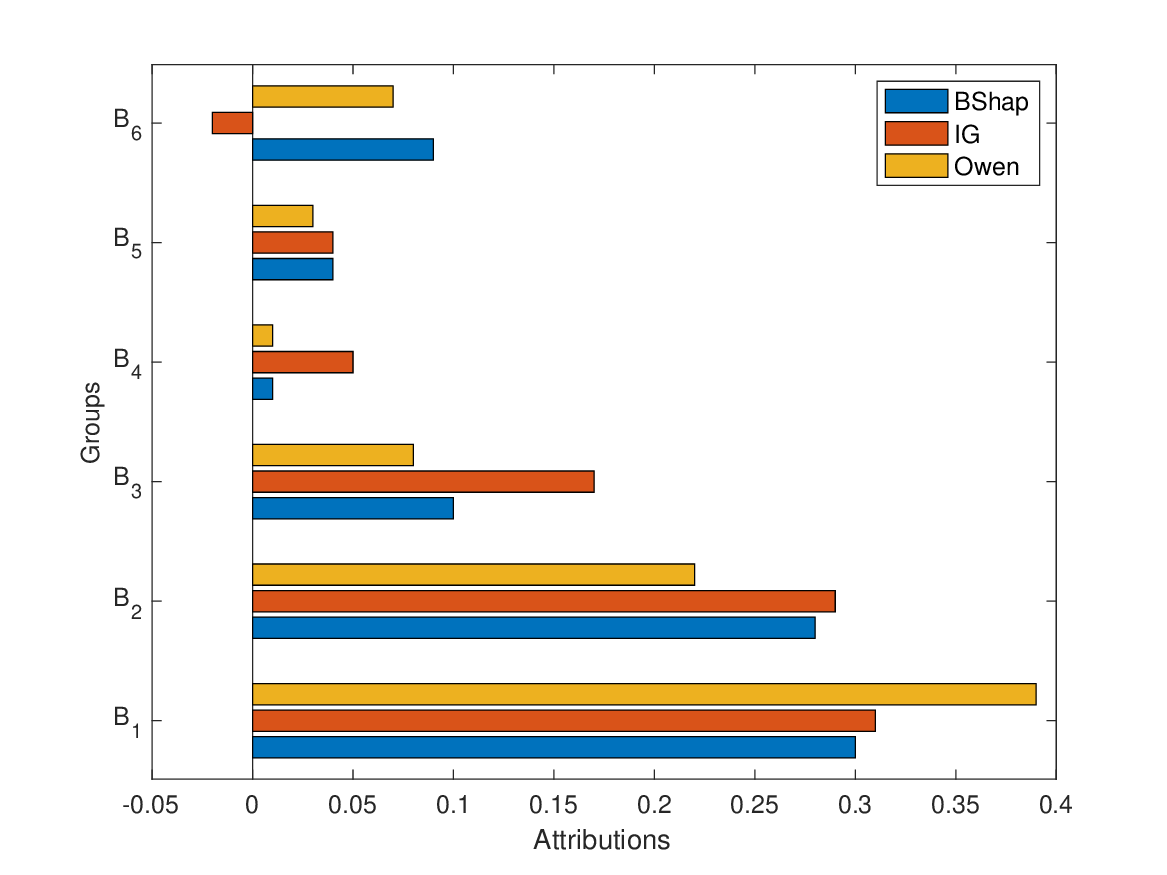}
    \caption{Group attributions of $\overline{\*x}_1$}
    \label{fig:group_attr}
\end{subfigure}%
\begin{subfigure}{.25\textwidth}
    \includegraphics[width=1\linewidth]{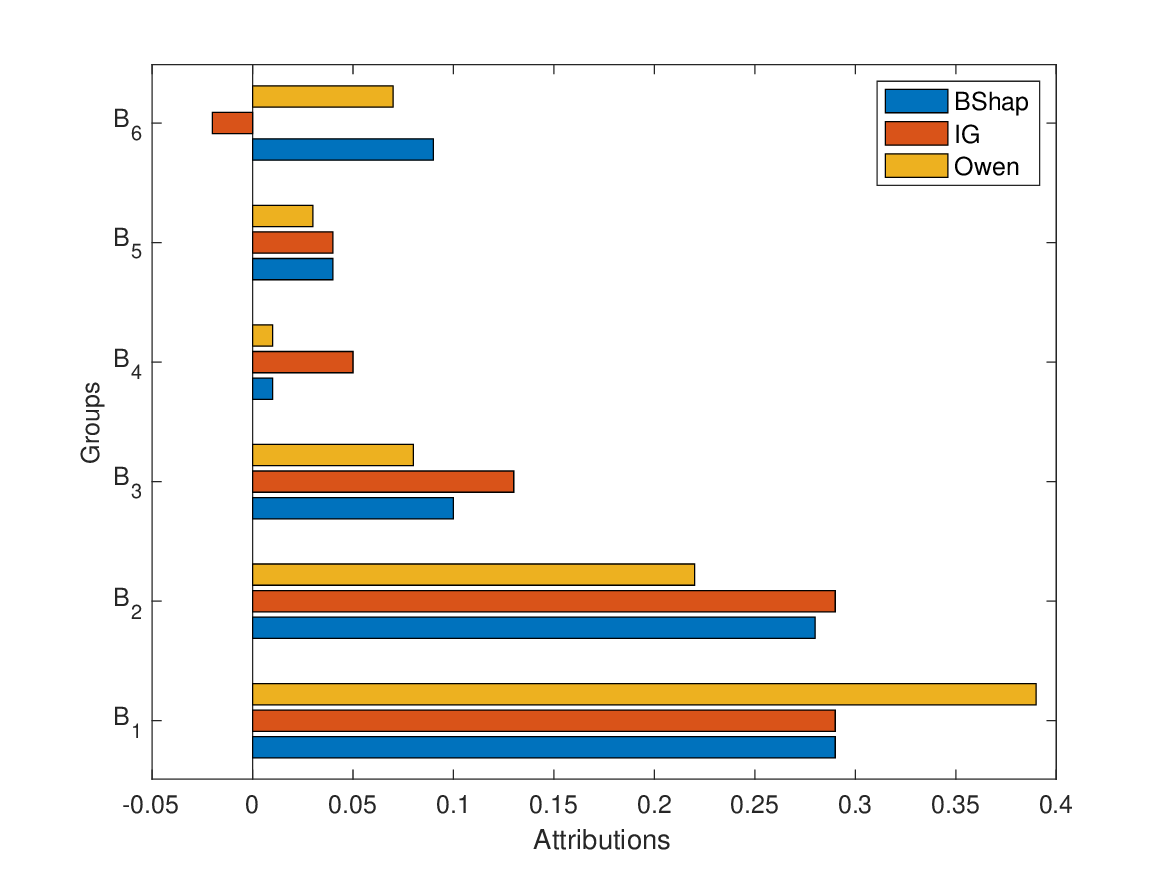}
    \caption{Group attributions of $\overline{\*x}_2$}
    \label{fig:group_attr2}
    \end{subfigure}
\end{figure}

Next, in Figure~\ref{fig:group_attr}, we compare group attributions based on different methods. On a qualitative level, all methods suggest that $B_1$, past-due information, is the dominant group due to a large number of past-due payments for this explicand. This is followed by $B_2$, balance information, and $B_3$, income information. Note that the rank of features and groups will differ for individual and group attributions. This is also expected since the overall past-due payment attribution would constitute the sum of each past-due payment information. Group attributions explain information at a higher level, summarizing information to provide an alternative intuitive explanation. Similar to individual attributions, there are some quantitative differences in group attributions between the three methods. As an example, Owen values assign much more weight to $B_1$ than BShap and IG. Hence, the choice of attribution method does matter, and we suggest the use of GShap, such as Owen, since it preserves desired properties based on financial domain knowledge, as discussed in Section~\ref{sec:axioms}. Below is an example that illustrates this point more specifically. 

% \begin{figure}[ht]
%     \centering
%     \begin{subfigure}{.5\textwidth}
%     \includegraphics[width=1\linewidth]{figures/Group_attr.eps}
%     \caption{Group attributions of $\overline{\*x}_1$ by BShap, IG, and Owen}
%     \label{fig:group_attr}
% \end{subfigure}
% \begin{subfigure}{.5\textwidth}
%     \includegraphics[width=1\linewidth]{figures/group_attr2.eps}
%     \caption{Group attributions of $\overline{\*x}_2$ by BShap, IG, and Owen}
%     \label{fig:group_attr2}
%     \end{subfigure}
% \caption{Group attributions by BShap, IG, and Owen}
% \end{figure}

We consider a further perturbation of the explicand $\overline{\*x}_1$ with
\begin{align*}
\overline{\*x}_2 = \left[ \begin{matrix} 
    \{2 & 3 & 4\} & \{1.01\} & \{0.57 & 4\} & \{11 & 0\} & \{4\} & \{30\} 
    \end{matrix} \right].
\end{align*}
This means that if a 30-day past-due payment does not pay off within two months then becomes 60 days past-due. The corresponding attributions are plotted in Figure~\ref{fig:attr2}. Due to the changes, attributions of $x_2$ have increased and attributions of $x_3$ have decreased by all methods. In terms of qualitative results, all methods produced reasonable explanations for individual attributions, but when it comes to group attributions in Figure~\ref{fig:group_attr2}, the story has changed. It is important to note that group attributions for $B_1$ have decreased for BShap and IG, violating GSPM. As a result of Proposition~\ref{prop:GSPM}, this did not occur for Owen. In this case, BShap and IG appear to be unreasonable since a worse past-due status should result in overall larger past-due attributions. The Owen value has successfully preserved such property. There are other examples that can be used to demonstrate the advantage of Owen value as discussed in Section~\ref{sec:axioms}. The Owen value is therefore recommended when the group structure is present to achieve better theoretical properties.

\bibliographystyle{ACM-Reference-Format}
\bibliography{sample-base}

\appendix

\section{Proofs}

\begin{proof} [Proof of \textbf{Proposition}~\ref{prop:GShap_invariant}]
    For any group transformations $\*q(\*x)$, as long as $\*q$ is invertible, we have $f(\*x) = f \circ \*q^{-1} \circ \*q(\*x)$. Therefore, 
    \begin{align*}
        v_B(\overline{\*x},\*x',f;T) = v_B(\*q(\overline{\*x}),\*q(\*x'),f \circ \*q^{-1}; T).
    \end{align*}
\end{proof}

\begin{proof}[Proof of \textbf{Proposition}~\ref{prop:GASI_BShap_IG}]
    
    For IG, suppose $B = \{B_1, \dots, B_l\}$. Since symmetry, WLOG, we focus on $B_1$. 
    % We want to show that 
    % \begin{align*}
    %     \text{IG}_i(\overline{\*x},\*x',f) = \text{IG}_i(\*h(\overline{\*x}),\*h(\*x'),f \circ \*h^{-1}), \text{ for } i \notin B_1.
    % \end{align*}
    For $i \notin B_j$, for $\*h(\*x) = \*h(\*x;B_j,\*A,\*b)$, where $\*h$ is invertible by Definition, 
    \begin{align*}
        \text{IG}_i(\overline{\*x},\*x',f) &= (\overline{x}_i-x_i') \int_0^1 \frac{\partial f}{\partial x_i}(\*x'+(\overline{\*x}-\*x')t) \ dt \\
        &=(\overline{x}_i-x_i') \int_0^1 \frac{ \left( \partial f \circ \*h^{-1} \right)}{\partial x_i} (\*h(\*x'+(\overline{\*x}-\*x')t)) \ dt \\
        &= (\overline{x}_i-x_i') \int_0^1 \frac{ \left( \partial f \circ \*h^{-1} \right)}{\partial x_i} (\*h(\*x')+(\*h(\overline{\*x})-\*h(\*x'))t) \ dt \\
        &= \text{IG}_i(\*h(\overline{\*x}),\*h(\*x'),f \circ \*h^{-1}).
    \end{align*}
    As a result, group affine transformation doesn't affect attributions of features outside of the group. 
    Suppose all features $x_j$ that are not in $B_1$ have been applied the corresponding affine scale transformation $\*h_1(\*x)$. Then for $i \in B_1$, we have
    \begin{align*}
        \text{IG}_i(\overline{\*x},\*x',f) = \text{IG}_i(\*h_1(\overline{\*x}),\*h_1(\*x'),f \circ \*h_1^{-1}). 
    \end{align*}
    By completeness, we have
    \begin{align*}
        \sum_{j \notin B_1} \text{IG}_j(\overline{\*x},\*x',f) = \sum_{j \notin B_1} \text{IG}_j (\*h_1(\overline{\*x}),\*h_1(\*x'),f \circ \*h_1^{-1}).
    \end{align*}
    Now let the entire affine transformation as $h_2(\*x)$, for $i \notin B_1$, we have
    \begin{align*}
        \text{IG}_i(\overline{\*x}, \*x', f) = \text{IG}_i(\*h_2(\overline{\*x}), \*h_2(\*x'), f \circ \*h_2^{-1}). 
    \end{align*}
    By completeness, for $i \in B_1$, we have
    \begin{align*}
        \text{IG}_i(\overline{\*x}, \*x', f) = \text{IG}_i(\*h_2(\overline{\*x}), \*h_2(\*x'), f \circ \*h_2^{-1}). 
    \end{align*}

\end{proof}

\begin{proof}[Proof of \textbf{Proposition}~\ref{prop:GDIM}]
    For GShap, by individual monotonicity, we have
    \begin{align*}
        & \text{GS}_{B_j} ((\overline{x}_i+c,\overline{\*x}_{\neg}),\*x',f) - \text{GS}_{B_j}(\overline{\*x},\*x',f) \\
        =& \sum_{T \subseteq L \backslash j} c(T) (v_B((\overline{x}_i+c,\overline{\*x}_{\neg}),\*x',f;T \cup i)-v_B(\overline{\*x},\*x',f;T \cup i))) 
       \geq 0.
    \end{align*}
\end{proof}

\begin{proof}[Proof of \textbf{Proposition}~\ref{prop:GSPM}]
    For GShap, by strong pairwise monotonicity, we have
    \begin{align*}
        & \text{GS}_{B_k} ((\overline{x}_{i}, \overline{x}_{j}+c,\overline{\*x}_{\neg}),\*x',f)  - \text{GS}_{B_k}((\overline{x}_{i}+c, \overline{x}_{j},\overline{\*x}_{\neg}),\*x',f) \\
        =& \sum_{T \subseteq L \backslash j} c(T) v_B((\overline{x}_{i}, \overline{x}_{j}+c,\overline{\*x}_{\neg}),\*x',f;T \cup i) \\
        -& c(T) v_B((\overline{x}_{i}+c, \overline{x}_{j},\overline{\*x}_{\neg}),\*x',f;T \cup i) \leq 0.
    \end{align*}
\end{proof}

\section{Model and Data} \label{sec:model_data}

For simplicity, data with missing variables are removed. % Past dues that are greater or equal to 20 are discarded. 
Past dues greater than four times are replaced by four due to the rarity. This also applies to $x_9$ if its value exceeds five. We categorize $x_6$ into the following intervals: $[0,\$2500)$, $[\$2,500,\$5,000)$, $[\$5,000,\$7,500)$, $[\$7,500,\$10,000)$, $[\$10,000,\$50,000)$, and $[\$50,000,\infty)$. Afterward, they are transformed from five to zero so that $f$ increases monotonically with respect to $x_6$.  We make such a choice to make features as easy to understand as possible for customers. This is not a unique choice. The model performance has been monitored to ensure that the accuracy does not deteriorate.
When checking for accuracy, the dataset is randomly partitioned into $70\%$ training and $30\%$ test sets.

% For  MGNAM, we consider the architecture
% \begin{align}
%     f(\*x) = f_{1,2,3}(x_1,x_2,x_3) + f_4(x_4) + \dots + f_{10}(x_{10}).
% \end{align}
% In other words, $x_1-x_3$ are grouped together, and the remaining features are handled using 1-dimensional functions. 
We use the monotonic groves of neural additive models \cite{chen2023address} since it preserves required individual and pairwise monotonicity for this dataset. This is not a unique choice and other models can also be applied.  
For $x_1-x_3$, we enforce strong pairwise monotonicity. We enforce individual monotonicity for $x_6$ and $x_9$. 
% All functions are approximated by neural networks with one hidden layer of four neurons. We focus on simple architectures since there is no apparent improvement in accuracy for more complicated models. 
The AUC of the model is around $85\%$, which is consistent with the literature. It might be possible to improve model performance by further cleaning the data, but since this is not the primary concern of our study, we opt to omit it for simplicity.

\end{document}